\date{}
\newenvironment{myquote}{\begin{center}
    \begin{minipage}{.80\linewidth}}{\end{minipage}\end{center}}
\DeclareRobustCommand{\exqed}{%
  \ifmmode
    \eqno \def\@badmath{$$}
    \let\eqno\relax \let\leqno\relax \let\veqno\relax
    \hbox{\ensuremath{\dashv}}%
  \else
    \leavevmode\unskip\penalty9999 \hbox{}\nobreak\hfill
    \quad\hbox{\ensuremath{\dashv}}%
  \fi
}
\newtheorem{lemma}{Lemma}
\newtheorem{proposition}{Proposition}
\theoremstyle{remark}
\newtheorem{example}{Example}
\renewenvironment{example}{\begin{ex}}{\hfill
    $\dashv$\end{ex}\medskip}
\newenvironment{examplenoqed}{\begin{ex}}{\end{ex}}
\newcommand{\cwd}{\text{\normalfont cwd}}
\renewcommand{\P}{\text{\normalfont P}}
\newcommand{\NP}{\text{\normalfont NP}}
\newcommand{\DDD}{\mathcal{D}}
\newcommand{\SB}{\{\,}%
\newcommand{\SM}{\;{:}\;}%
\newcommand{\SE}{\,\}}%
\newcommand{\Card}[1]{|#1|}
\newcommand{\hy}{\hbox{-}\nobreak\hskip0pt}
\newcommand{\Fder}{F_{\text{\normalfont der}}}
\newcommand{\comps}{\text{cmp}}
\newcommand{\groups}{\text{grp}}
\begin{document}

\title{A SAT Approach to Clique-Width}
\author{Marijn J. H. Heule\thanks{Research supported in part by the National Science
   Foundation under grant CNS-0910913 and DARPA
    contract number N66001-10-2-4087.}\\
\small  Department of Computer Sciences\\[-3pt]
\small  The University of Texas at Austin, USA
  \and Stefan Szeider\thanks{Research supported by the ERC, grant
    reference 239962 (COMPLEX REASON).}\\
\small  Institute of Information Systems\\[-3pt]
\small  Vienna University of Technology,
  Vienna, Austria 
}

\maketitle

\thispagestyle{empty}

\begin{abstract}
  Clique-width is a graph invariant that has been widely studied in
  combinatorics and computer science.  However, computing the
  clique-width of a graph is an intricate problem, the exact
  clique-width is not known even for very small graphs.  We present a
  new method for computing the clique-width of graphs based on an
  encoding to propositional satisfiability (SAT) which is then
  evaluated by a SAT solver. Our encoding is based on a reformulation
  of clique-width in terms of partitions that utilizes an efficient
  encoding of cardinality constraints.  Our SAT-based method is the
  first to discover the exact clique-width of various small graphs,
  including famous graphs from the literature as well as random graphs
  of various density. With our method we determined the smallest graphs
  that require a small pre-described clique-width.
\end{abstract}

\section{Introduction}\label{sect:intro}

Clique-width is a fundamental graph invariant that has been widely
studied in combinatorics and computer science. Clique-width measures
in a certain sense the ``complexity'' of a graph. It is defined via a
graph construction process involving four operations where only a
limited number of vertex labels are available; vertices that share the
same label at a certain point of the construction process must be
treated uniformly in subsequent steps.  This graph composition
mechanism was first considered by Courcelle, Engelfriet, and Rozenberg
\cite{CourcelleEngelfrietRozenberg90,CourcelleEngelfrietRozenberg93}
and has since then been an important topic in combinatorics and
computer science.

Graphs of small clique-width have  advantageous algorithmic properties. 
Algorithmic meta-theorems show that large classes of
$\NP$-hard optimization problems and \#P-hard counting problems can be
solved in \emph{linear time} on classes of graphs of bounded
clique-width~\cite{CourcelleMakowskyRotics00,CourcelleMakowskyRotics01}. Similar
results hold for the graph invariant \emph{treewidth}, however,
clique-width is more general in the sense that graphs of small
treewidth also have small clique-width, but there are graphs of small
clique-width but arbitrarily high treewidth
\cite{CourcelleOlariu00,CornelRotics05}. Unlike treewidth, dense
graphs (e.g., cliques) can also have small clique-width.

All these algorithms for graphs of small clique-width require that a
certificate for the graph having small clique-width is provided.
However, it seems that computing the certificate, or just deciding
whether the clique-width of a graph is bounded by a given number, is a
very intricate combinatorial problem.  More precisely, given a graph
$G$ and an integer $k$, deciding whether the clique-width of $G$ is at
most~$k$ is $\NP$\hy complete~\cite{FellowsRosamondRoticsSzeider09}.
Even worse, the clique-width of a graph with $n$ vertices of degree
greater than~2 cannot be approximated by a polynomial-time algorithm
with an absolute error guarantee of $n^\epsilon$ unless $\P=\NP$,
where $0\leq \epsilon < 1$~\cite{FellowsRosamondRoticsSzeider09}.  In
fact, it is even unknown whether graphs of clique-width at most~$4$
can be recognized in polynomial time~\cite{CornelEtal12}.  There are
approximation algorithms with an exponential error that, for
fixed~$k$, compute $f(k)$\hy expressions for graphs of clique-width at
most~$k$ in polynomial time (where $f(k)=(2^{3k+2}-1)$ by~\cite{OumSeymour06},
and $f(k)=8^k-1$ by~\cite{Oum08}).

Because of this intricacy of this graph invariant, the exact
clique-width is not known even for very small graphs.


\paragraph{Clique-width via SAT.}

We present a new method for determining the
clique-width based on a sophisticated SAT encoding which entails the
following  ideas:
\begin{enumerate}
\item \emph{Reformulation}. The conventional construction method for
  determining the clique-width of a graph consists of many steps. In
  the worst case, the number of steps is quadratic in the number of
  vertices.  Translating this construction method into SAT would
  result in large instances, even for small graphs. We reformulated
  the problem in such a way that the number of steps is less than the
  number of vertices.  The alternative construction method allows us
  to compute the clique-width of much larger graphs.

\item \emph{Representative encoding}. Applying the frequently-used
  direct encoding~\cite{Walsh00} on the reformulation results in
  instances that have no arc consistency~\cite{Gent02}, i.e., unit
  propagation may find conflicts much later than required. We
  developed the representative encoding that is compact and realizes
  arc consistency.

\end{enumerate}
 
\paragraph{Experimental Results.}

The implementation of our method allows us for the first time to
determine the exact clique-width of various graphs, including famous
graphs known from the literature, as well as random graphs of various
density.

\begin{enumerate}
\item \emph{Clique-width of small Random Graphs}.  We determined
  experimentally how the clique-width of random graphs depends on the
  density. The clique-width is small for dense and sparse graphs and
  reaches its maximum for edge-probability $0.5$. The larger $n$, the
  steeper the increase towards $0.5$.  These results complement the
  asymptotic results of Lee et al.~\cite{LeeLeeOum12}.

\item \emph{Smallest Graphs of Certain Clique-width}. In general it is
  not known how many vertices are required to form a graph of a certain
  clique-width. We provide these numbers for clique-width $k\in
  \{1,\dots,7\}$. In fact, we could compute the total number of
  connected graphs (modulo isomorphism) with a certain clique-width
  with up to 10 vertices. For instance, there are only~7 connected
  graphs  with 8 vertices and clique-width 5 (modulo isomorphism),
  and no graphs with 9 vertices and clique-width 6. There are 68 graphs
  with 10 vertices and clique-width 6. The smallest one has 18 edges. 
  
\item \emph{Clique-width of Famous Named Graphs}. Over the last 50
  years, researchers in graph theory have considered a large number of
  special graphs. These special graphs have been used as
  counterexamples for conjectures or for showing the tightness of
  combinatorial results.  We considered several prominent graphs from
  the literature and computed their exact clique-width. These results
  may be of interest for people working in combinatorics and graph
  theory.
\end{enumerate}

\paragraph{Related Work.} We are not aware of any implemented
algorithms that compute the clique-width exactly or
heuristically. However, algorithms have been implemented that compute
upper bounds on other width-based graph invariants, including
\emph{treewidth}~\cite{DowKorf07,GogateDechter04,KosterBodlaenderHoesel01},
\emph{branchwidth}~\cite{SmithUlusalHicks12},
\emph{Boolean-width}~\cite{HvidevoldEtal11}, and
\emph{rank-width}~\cite{Beyss2013}.  Samer and Veith
\cite{SamerVeith09} proposed a SAT encoding for the exact computation
of treewidth.  Boolean-width and rank-width can be used to approximate
clique-width, however, the error can be exponential in the
clique-width; in contrast, treewidth and branchwidth can be
arbitrarily far from the clique-width, hence the approximation error
is unbounded~\cite{BuixuanTelleVatshelle11}.

Our SAT encoding is based on a new characterization of clique-width
that is based on partitions instead of labels. A similar
partition-based characterization of clique-width, has been proposed by
Heggernes et al.~\cite{HeggernesMeisterRotics11}.  There are two main
differences to our reformulation. Firstly, our characterization of
clique-width uses three individual properties that can be easily
expressed by clauses.  Secondly, our characterization admits the
``parallel'' processing of several parts of the graph that are later
joined together.

\section{Preliminaries}
\label{sec:prelim}

\subsection{Formulas and Satisfiability}
We consider propositional formulas in Conjunctive Normal Form
(\emph{CNF formulas}, for short), which are conjunctions of clauses,
where a clause is a disjunction of literals, and a literal is a
propositional variable or a negated propositional variables.  A CNF
formula is \emph{satisfiable} if its variables can be assigned true or
false, such that each clause contains either a variable set to true or
a negated variable set to false.  The satisfiability problem (SAT)
asks whether a given formula is satisfiable.


\subsection{Graphs and Clique-width}

All graphs considered are finite, undirected, and without self-loops.
We denote a graph $G$ by an ordered pair $(V(G),E(G))$ of its set of
vertices and its set of edges, respectively. An edge between vertices
$u$ and $v$ is denoted $uv$ or equivalently $vu$. For basic
terminology on graphs we refer to a standard text book~\cite{Diestel00}.

Let $k$ be a positive integer. A $k$-\emph{graph} is a graph whose
vertices are labeled by integers from $\{1,\dots,k\}$.  We consider an
arbitrary graph as a $k$\hy graph with all vertices labeled by~$1$.
We call the $k$-graph consisting of exactly one vertex $v$ (say,
labeled by $i$) an \emph{initial} $k$-graph and denote it by $i(v)$.
The \emph{clique-width} of a graph $G$ is the smallest integer $k$
such that $G$ can be constructed from initial $k$-graphs by means of
repeated application of the following three operations.
\begin{enumerate}
\item Disjoint union (denoted by $\oplus$);
\item Relabeling: changing all labels $i$ to $j$ (denoted by $\rho_{i
    \rightarrow j}$);
\item Edge insertion: connecting all vertices labeled by $i$ with all
  vertices labeled by $j, i \neq j$ (denoted by $\eta_{i,j}$ or
  $\eta_{j,i}$); already existing edges are not doubled.
\end{enumerate}
A construction of a $k$\hy graph using the above operations can be
represented by an algebraic term composed of $\oplus$, $\rho_{i
  \rightarrow j}$, and $\eta_{i,j}$ ($i,j\in \{1,\dots,k\}$, and
$i\neq j$).  Such a term is called a \emph{$k$\hy expression}
defining~$G$.  Thus, the clique-width of a graph $G$ is the smallest
integer $k$ such that $G$ can be defined by a $k$\hy expression.
\begin{example}\label{ex:p4}
  The graph $P_4=(\{a,b,c,d\},\{ab, bc, cd\})$ is defined by the
  $3$\hy expression%
 \[ \eta_{2,3}( \rho_{2\rightarrow 1}(
    \eta_{2,3}( \eta_{1,2}(1(a) \oplus 2(b)) \oplus 3(c) ) ) \oplus
    2(d) ).
\]
Hence $\cwd(P_4)\leq 3$. In fact, one can show that $P_4$ it has no
$2$\hy expression, and thus $\cwd(P_4)=3$~\cite{CourcelleOlariu00}.
\end{example}

\subsection{Partitions}

As partitions play an important role in our reformulation of
clique-width, we recall some basic terminology. A \emph{partition} of
a set $S$ is a set $P$ of nonempty subsets of $S$ such that any two
sets in $P$ are disjoint and $S$ is the union of all sets in $P$. The
elements of $P$ are called \emph{equivalence classes}.  Let $P,P'$ be
partitions of $S$. Then $P'$ is a \emph{refinement} of $P$ if for any
two elements $x,y\in S$ that are in the same equivalence class of $P'$
are also in the same equivalence class of $P$ (this entails the case
$P=P'$).
 
\section{A Reformulation of Clique-width without Labels}

Initially, we developed a SAT encoding of clique-width based on
$k$-expressions.  Even after several optimization steps, this encoding
was only able to determine the clique-width of graphs consisting of at
most 8 vertices.  We therefore developed a new encoding based on a 
reformulation of clique-width which does not use $k$\hy
expressions. In this section we explain this reformulation, in the
next section we will discuss how it can be encoded
into~SAT efficiently.


Consider a finite set $V$, the \emph{universe}.  A \emph{template} $T$
consists of two partitions $\comps(T)$ and $\groups(T)$ of~$V$. We
call the equivalence classes in $\comps(T)$ the \emph{components} of
$T$ and the equivalence classes in $\groups(T)$ the \emph{groups} of
$T$.  For some intuition about these concepts, imagine that  components represent
induced subgraphs and that groups represent sets of vertices in some component 
with the same label in a $k$-expression. A \emph{derivation} of length $t$ is a finite sequence
$\DDD=(T_0,\dots,T_t)$ satisfying the following conditions.
\begin{enumerate}
\item[D1] ~~~$\Card{\comps(T_0)}=\Card{V}$ and $\Card{\comps(T_t)}=1$.
\item[D2] ~~~$\groups(T_{i})$ is a refinement of $\comps(T_{i})$, $0\leq
  i \leq t$.
\item[D3] ~~~$\comps(T_{i-1})$ is a refinement of  $\comps(T_{i})$, $1\leq i \leq
  t$.
\item[D4] ~~~$\groups(T_{i-1})$ is a refinement of $\groups(T_{i})$, $1\leq i
  \leq t$.
\end{enumerate}
We would like to note that D1 and D2 together imply that
$\Card{\groups(T_0)}=\Card{V}$. Thus, in the first template $T_0$ all
equivalence classes (groups and components) are singletons, and when
we progress through the derivation, some of these sets are merged,
until all components are merged into a single component in the last
template $T_t$.

The \emph{width} of a component $C\in \comps(T)$ is the number of
groups $g\in \groups(T)$ such that \mbox{$g\subseteq C$}.  The width
of a template is the maximum width over its components, and the width
of a derivation is the maximum width over its templates.  A
\emph{$k$\hy derivation} is a derivation of width at most~$k$.
A derivation $\DDD=(T_0,\dots,T_t)$ is a derivation \emph{of} a
graph $G=(V,E)$ if $V$ is the universe of the derivation and the
following three conditions hold for all $1\leq i \leq t$.
\begin{description}
\item{\emph{Edge Property}:} For any two vertices $u,v\in V$ such that
  $uv\in E$, if $u,v$ are in the same group in $T_i$, then $u,v$
  are in the same component in $T_{i-1}$.
\item{\emph{Neighborhood Property}:} For any three vertices $u,v,w\in V$
  such that $uv\in E$ and $uw\notin E$, if $v,w$ are in the same group
  in $T_i$, then $u,v$ are in the same component in $T_{i-1}$.
\item{\emph{Path Property}:} For any four vertices $u,v,w,x\in V$,
  such that $uv, uw, vx \in E$ and $wx \notin E$, if $u,x$ are in the
  same group in $T_i$ and $v,w$ are in the same group in $T_i$, then
  $u,v$ are in the same component in $T_{i-1}$.
\end{description}
The neighborhood property and the path
property could be merged into a single property if we do not insist that
all mentioned vertices are distinct. However, two separate properties
provide a more compact SAT encoding.

The following example illustrates that a derivation can define more
than one graph, in contrast to a $k$\hy expression, which defines
exactly one graph.

\begin{example}\label{ex:derivation}
  Consider the derivation $\DDD=(T_0,\dots,T_3)$ with
  universe $V=\{a,b,c,d\}$ and
  \begin{myquote}
  $\begin{array}[b]{lclclcl}
    \comps(T_0)&=&\{\{a\},\{b\},\{c\},\{d\}\}, 
    & \quad & 
    \groups(T_0)&=& \{\{a\},\{b\},\{c\},\{d\}\}, \\  
    \comps(T_1)&=&\{\{a,b\},\{c\},\{d\}\}, 
    & \quad & 
    \groups(T_1)&=&\{\{a\},\{b\},\{c\},\{d\}\},\\ 
    \comps(T_2)&=&\{\{a,b,c\},\{d\}\}, 
    & \quad & 
    \groups(T_2)&=& \{\{a\},\{b\},\{c\},\{d\}\}, \\  
    \comps(T_3)&=&\{\{a,b,c,d\}\}, 
    & \quad & 
    \groups(T_3)&=& \{\{a,b\},\{c\},\{d\}\}.  
  \end{array}$ 
\end{myquote}
The width of $\DDD$ is $3$.  Consider the graph
$G=(V,\{ab,ad,bc,bd\})$. To see that~$\DDD$ is a $3$\hy derivation
of~$G$, we need to check the edge, neighborhood, and path
properties. We observe that $a,b$ are the only two vertices such that
$ab\in E(G)$ and both vertices appear in the same group of some $T_i$
(here, we have $i=3$).  
To check the edge property, we
only need to verify that $a,b$ are in the same component of~$T_2$,
which is true. For the neighborhood property, the only
relevant choice of three vertices is $a,b,c$ ($bc\in E(G)$, $ac\notin
E(G)$, and $a,b$ in a group of~$T_3$). The neighborhood
property requires that $b,c$ are in the same component in $T_2$, which
is the case.  The path property is satisfied since there is no
template in which two pairs of vertices belong to the same group,
respectively.

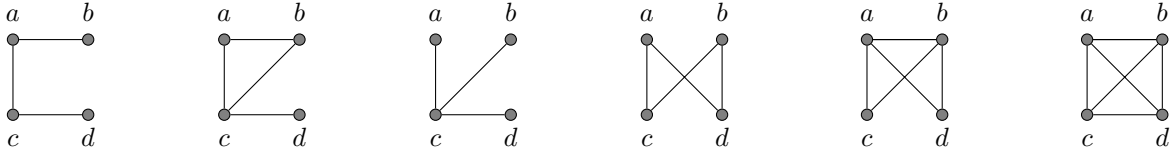
\begin{figure}[ht]
\centering
 \tikzstyle{every circle node}=[circle,draw,inner sep=1.5pt, fill=gray]

\begin{tikzpicture}
 \draw 
 (0,1)      node[circle] (a) {} node[above] {$\strut a$} 
 (1,1)      node[circle] (b) {} node[above] {$\strut b$} 
 (0,0)      node[circle] (c) {} node[below] {$\strut c$} 
 (1,0)      node[circle] (d) {} node[below] {$\strut d$} 
 (b)--(a)--(c)--(d)
;
\end{tikzpicture}\hfill
\begin{tikzpicture}
 \draw 
 (0,1)      node[circle] (a) {} node[above] {$\strut a$} 
 (1,1)      node[circle] (b) {} node[above] {$\strut b$} 
 (0,0)      node[circle] (c) {} node[below] {$\strut c$} 
 (1,0)      node[circle] (d) {} node[below] {$\strut d$} 
 (a)--(b)--(c)--(a) (c)--(d)
;
\end{tikzpicture}\hfill
\begin{tikzpicture}
 \draw 
 (0,1)      node[circle] (a) {} node[above] {$\strut a$} 
 (1,1)      node[circle] (b) {} node[above] {$\strut b$} 
 (0,0)      node[circle] (c) {} node[below] {$\strut c$} 
 (1,0)      node[circle] (d) {} node[below] {$\strut d$} 
 (a)--(c)--(d) (b)--(c)
;
\end{tikzpicture}\hfill
\begin{tikzpicture}
 \draw 
 (0,1)      node[circle] (a) {} node[above] {$\strut a$} 
 (1,1)      node[circle] (b) {} node[above] {$\strut b$} 
 (0,0)      node[circle] (c) {} node[below] {$\strut c$} 
 (1,0)      node[circle] (d) {} node[below] {$\strut d$} 
 (a)--(c)--(b)--(d)--(a)
;
\end{tikzpicture}
\hfill
\begin{tikzpicture}
 \draw 
 (0,1)      node[circle] (a) {} node[above] {$\strut a$} 
 (1,1)      node[circle] (b) {} node[above] {$\strut b$} 
 (0,0)      node[circle] (c) {} node[below] {$\strut c$} 
 (1,0)      node[circle] (d) {} node[below] {$\strut d$} 
 (a)--(b)--(d)--(a)--(c)--(b)--(a)
;
\end{tikzpicture}
\hfill
\begin{tikzpicture}
 \draw 
 (0,1)      node[circle] (a) {} node[above] {$\strut a$} 
 (1,1)      node[circle] (b) {} node[above] {$\strut b$} 
 (0,0)      node[circle] (c) {} node[below] {$\strut c$} 
 (1,0)      node[circle] (d) {} node[below] {$\strut d$} 
 (a)--(b)--(d)--(a)--(c)--(b)--(a) (c)--(d)
;
\end{tikzpicture}
  \vspace{5pt}
  \caption{All connected graphs with four vertices (up to isomorphism). 
  The 3-derivation of Example~\ref{ex:derivation} defines all six graphs. The
  clique-width of all but the first graph is 2.}  
  \label{fig:four}
\end{figure}

Similarly we can verify that $\DDD$ is a derivation of the graph
$G'=(V,\{ab$, $bc$, $cd\})$.  In fact, for all connected graphs with four
vertices, there exists an isomorphic graph that is defined by $\DDD$
(see Figure~\ref{fig:four}).  However, $\DDD$ is not a derivation of the
graph $G''=(V,\{ab,ac,bd,cd\})$ since the neighborhood property is
violated: $bd \in E(G'')$ and $ad \notin E(G'')$, $a,b$ belong to the
same group in $T_3$, while $a,d$ do not belong to the same component
in~$T_2$.
\end{example}


\sloppypar We call a derivation $(T_0,\dots,T_t)$ to be \emph{strict}
if $\Card{\comps(T_{i-1})} > \Card{\comps(T_{i})}$ holds for all
$1\leq i \leq t$. 
\begin{lemma}\label{lem:make-strict}
  If $G$ has a $k$\hy derivation, it has a strict $k$\hy derivation.
\end{lemma}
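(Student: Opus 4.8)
The plan is to take an arbitrary $k$\hy derivation $\DDD=(T_0,\dots,T_t)$ of $G$ and repeatedly delete redundant templates until the component count strictly decreases at every step. First I would record a monotonicity observation: by D3 each $\comps(T_{i-1})$ refines $\comps(T_i)$, so $\Card{\comps(T_{i-1})}\ge \Card{\comps(T_i)}$, and the sequence of component counts is non-increasing. Call a step $i$ (with $1\le i\le t$) \emph{non-strict} if $\Card{\comps(T_{i-1})}=\Card{\comps(T_i)}$. Since a refinement of a partition that has the same number of classes must coincide with it, a non-strict step satisfies the stronger equality $\comps(T_{i-1})=\comps(T_i)$. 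This exact equality is the feature I intend to exploit.

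The main construction is: whenever a non-strict step $i$ exists, delete the template $T_i$ from the sequence and reindex. I would then show that the shortened sequence $(T_0,\dots,T_{i-1},T_{i+1},\dots,T_t)$ is again a $k$\hy derivation of $G$. Conditions D1 and D2 are immediate, since $T_0$ is never removed, every surviving template is untouched (so D2 still holds for each one), and the last template always has a single component: if $i=t$ then $\Card{\comps(T_{t-1})}=\Card{\comps(T_t)}=1$, and otherwise $T_t$ survives. The only new adjacency created is the pair $(T_{i-1},T_{i+1})$ (when $i<t$); D3 and D4 for this pair follow by transitivity of refinement, since $\comps(T_{i-1})$ refines $\comps(T_i)$ refines $\comps(T_{i+1})$ and likewise for the groups. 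Finally, the width can only drop, as it is a maximum taken over a smaller collection of templates, so the result still has width at most~$k$.

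The heart of the argument is checking the edge, neighborhood, and path properties for the new pair $(T_{i-1},T_{i+1})$, and here the choice to delete the \emph{later} template $T_i$ rather than $T_{i-1}$ is essential. Each of the three properties has the form: a group-level hypothesis about vertices in $T_{i+1}$ forces certain vertices to lie in a common component of the \emph{preceding} template. The original derivation guarantees this conclusion for $T_i$, and because $\comps(T_i)=\comps(T_{i-1})$, the statement ``same component in $T_i$'' is literally ``same component in $T_{i-1}$''; hence each property transfers verbatim to the pair $(T_{i-1},T_{i+1})$. I expect this to be the step requiring the most care, and it is precisely where deleting $T_{i-1}$ instead would fail: the property would then demand a common component in the strictly finer $\comps(T_{i-2})$, which the original derivation need not supply.

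To conclude, I would package the deletion into an induction on the length~$t$: each deletion strictly decreases the length while preserving the property of being a $k$\hy derivation of $G$, so after finitely many steps no non-strict step remains and the resulting derivation is strict. The only genuinely delicate point is the property verification of the previous paragraph; everything else is bookkeeping with refinements and with the definition of width.
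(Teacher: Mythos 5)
Your proof is correct and follows essentially the same route as the paper's: repeatedly delete the later template $T_i$ of a pair with $\comps(T_{i-1})=\comps(T_i)$ and observe that the edge, neighborhood, and path properties for $T_{i+1}$ only reference the components of the preceding template, which are unchanged. If anything, your uniform ``always delete the later one'' rule is slightly cleaner than the paper's case split on whether the groups also coincide and whether $i=t$.
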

\begin{proof}
  Let $\DDD=(T_0,\dots,T_t)$ be a $k$\hy derivation of~$G$.  Assume
  there is some $1 \leq i\leq t$ such that
  $\comps(T_{i-1})=\comps(T_{i})$.  If also
  $\groups(T_{i-1})=\groups(T_{i})$, then $T_{i-1}=T_i$, and we can
  safely remove $T_{i-1}$ and still have a $k$\hy derivation
  of~$G$. Hence assume $\groups(T_{i-1})\neq \groups(T_{i})$.  This
  implies that $i>1$.  If $i=t$, then we can safely remove $T_t$ from
  the derivation and $(T_0,\dots,T_{t-1})$ is clearly a $k$\hy
  derivation of $G$. Hence it remains to consider the case $1 < i
  \leq t-1$.  We show that by dropping $T_i$ we get a sequence
  $\DDD'=(T_0,\dots,T_{i-1},T_{i+1},\dots,T_t)$ that is a $k$\hy
  derivation of $G$. 

  The new sequence $\DDD'$ is clearly a $k$\hy derivation.  It remains
  to verify that $\DDD'$ is a derivation of~$G$. The template
  $T_{i+1}$ is the only one where these properties might have been
  violated by the removal of $T_i$. However, since all three
  properties impose a restriction on the set of components of the
  template preceding $T_{i+1}$, and since
  $\comps(T_{i-1})=\comps(T_{i})$, the properties are not effected by
  the deletion of $T_i$. Hence $\DDD'$ is indeed a $k$\hy derivation
  of $G$.

  By repeated application of the above shortening we can turn any
  $k$\hy derivation into a strict $k$\hy derivation.
\end{proof}

\begin{lemma}\label{lem:strict-short}
  Every strict $k$\hy derivation of a graph with $n$ vertices has
  length at most $n-1$.
\end{lemma}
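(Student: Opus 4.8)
The plan is to exploit the strictness condition, which forces the number of components to strictly decrease at every step, together with the boundary conditions D1 that pin down the number of components at the two ends of the derivation.

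First I would recall the two facts that constrain the sequence of component counts. By D1 we have $\Card{\comps(T_0)}=\Card{V}=n$ and $\Card{\comps(T_t)}=1$. By D3 the partition $\comps(T_{i-1})$ refines $\comps(T_{i})$, so $\Card{\comps(T_{i-1})}\geq \Card{\comps(T_{i})}$, and the definition of \emph{strict} sharpens this to a \emph{strict} inequality $\Card{\comps(T_{i-1})} > \Card{\comps(T_{i})}$ for every $1\leq i\leq t$. Thus the integer sequence $a_i=\Card{\comps(T_i)}$ is strictly decreasing.

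Next I would simply chain these inequalities. Since each step decreases the component count by at least one, after $t$ steps the count has dropped by at least $t$, i.e.\ $a_t \leq a_0 - t$. Substituting the boundary values $a_0=n$ and $a_t=1$ gives $1 \leq n - t$, which rearranges to $t \leq n-1$. This is exactly the claimed bound on the length.

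Honestly, this statement is elementary enough that there is no real obstacle: the only thing to be careful about is that the component counts are positive integers and that strictness is applied at every one of the $t$ transitions, so that the telescoping $a_0 - a_t = \sum_{i=1}^{t}(a_{i-1}-a_i) \geq t$ is valid. One could phrase the argument purely as this telescoping sum to make the dependence on strictness fully explicit, but the monotone-descent reading is equally rigorous.
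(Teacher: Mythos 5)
Your proof is correct and follows the same route as the paper's (very terse) argument: the component count starts at $n$ by D1, ends at $1$, and drops by at least one at each of the $t$ steps by strictness, forcing $t\leq n-1$. Your telescoping formulation just makes explicit what the paper leaves implicit.
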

\begin{proof}
  Let $(T_0,\dots,T_t)$ be a strict $k$\hy derivation of a graph with
  $n$ vertices.  Since $\Card{\comps(T_{0})}=n$ and
  $\Card{\comps(T_{0})}=1$, it follows that $t\leq n-1$.
\end{proof}

\medskip

In the proofs of the next two lemmas we need the following concept of
a \emph{$k$\hy expression tree}, which is the parse tree of a $k$\hy
expression equipped with some additional information.  Let $\phi$ be a
$k$\hy expression for a graph $G=(V,E)$.  Let $Q$ be the parse tree of
$\phi$ with root~$r$.  That is, $Q$ contains a node for each occurrence
of an operation $\oplus$, $\rho_{i \rightarrow j}$, and $\eta_{i,j}$
in $\phi$ and for each initial $k$\hy graph $i(v)$ in $\phi$; the
initial $k$\hy graphs are the leaves of $Q$, and the other nodes have
as children the nodes which represent the two subexpressions of the
respective operation.  Consider a node $q$ of $Q$ and let $\phi_q$ be
the subexpression of $\phi$ whose parse tree is the subtree of $Q$
rooted at $q$. Then $q$ is labeled with the $k$\hy graph $G_q$
constructed by the $k$\hy expression~$\phi_q$.  Thus the leaves of $Q$
are labeled with initial $k$\hy graphs and the root $r$ is labeled
with a labeled version of~$G$.  We call a non-leaf node of $Q$ an
$\oplus$\hy node, $\eta$\hy node, or $\rho$\hy node, according to the
operation it represents.

One $\oplus$\hy node of the parse tree can represent several directly
subsequent $\oplus$\hy operations (e.g., the operation $(x \oplus y)
\oplus z$ can be represented by a single node with three children).
For technical reasons we will also allow $\oplus$\hy nodes with a
single child.

Each $k$\hy expression gives rise to a $k$\hy expression tree where
each $\oplus$\hy node has no $\oplus$\hy nodes as children, let us
call such a $k$\hy expression tree to be \emph{succinct}.  Evidently,
$k$\hy expressions and their (succinct) $k$\hy expression trees can be
effectively transformed into each other.

\begin{lemma}\label{lem:expression->derivation}
  From a $k$\hy expression of a graph $G$ we can obtain a $k$\hy
  derivation of~$G$ in polynomial time.
\end{lemma}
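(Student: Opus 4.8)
The plan is to convert a $k$-expression into a $k$-derivation by reading off, from the $k$-expression tree, a sequence of templates whose components track the disjoint-union structure and whose groups track the current labels. The key conceptual bridge is the intuition already stated in the text: components should correspond to the vertex sets of the subgraphs $G_q$ assembled so far, and groups should correspond to maximal sets of vertices sharing a label at a given stage of the construction.

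First I would pass to a succinct $k$-expression tree $Q$, which is available by the remarks preceding the lemma. I would then assign to each node $q$ of $Q$ a ``time stamp'' reflecting how far the construction has progressed; the natural choice is to process the $\oplus$-nodes in a bottom-up order, since these are precisely the nodes at which two previously separate components get merged. For a suitable enumeration, I would define template $T_i$ by setting $\comps(T_i)$ to be the partition of $V$ whose classes are the vertex sets $V(G_q)$ of the subexpressions completed by stage $i$ (together with singletons for vertices not yet touched), and $\groups(T_i)$ to be the partition of $V$ into label-classes, where two vertices are in the same group iff they currently carry the same label within the same component. With this definition D1--D4 should fall out almost mechanically: D1 holds because at the start every vertex is its own component and at the end the root yields a single component; D2 holds because two vertices of different components trivially have distinct labels in my group definition, so groups refine components; and D3, D4 hold because as we move up the tree components only merge (via $\oplus$) and labels only coarsen (a $\rho_{i\to j}$ merges label classes and an $\oplus$ can only identify labels across the newly joined parts), never split. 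The width bound is the crucial sanity check: the number of groups inside a component $V(G_q)$ equals the number of distinct labels occurring in $G_q$, which is at most $k$ because $\phi$ is a $k$-expression; hence the derivation has width at most $k$.

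The main obstacle, and the part I would spend the most care on, is verifying the three graph-defining properties (Edge, Neighborhood, Path), since these are where the semantics of the $\eta$-operation must be matched against the combinatorial conditions on templates. The governing principle is that an edge $uv\in E$ is created by some $\eta_{i,j}$-node, and at that moment $u$ and $v$ must lie in the same component but carry different labels. I would argue that for the Edge Property, if $u,v$ share a group in $T_i$ then they share a label at stage $i$, but an edge between them can only have been inserted while they had \emph{different} labels, i.e.\ at some earlier stage where they already lay in a common component; translating ``earlier'' into the index $i-1$ requires aligning my time stamps so that the stage at which $u,v$ first coincide in a group is strictly later than the stage at which their common component formed. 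The Neighborhood and Path properties follow the same pattern: if $v,w$ share a group at stage $i$, they have had identical label histories up to that point, so they must have identical adjacencies to any vertex $u$ outside their common component, forcing $u$ into that component already at stage $i-1$ whenever $u$ distinguishes $v$ from $w$ (as witnessed by $uv\in E$, $uw\notin E$, or the four-vertex path configuration).

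Finally I would confirm the polynomial-time claim: the succinct tree has size linear in the length of $\phi$, there are at most $n-1$ relevant $\oplus$-merges hence at most $n$ templates by Lemma~\ref{lem:strict-short}'s counting, and each template is a partition of $V$ computable by a single pass over the tree, so the whole construction runs in polynomial time. I expect the bookkeeping for the time-stamp alignment in the Edge/Neighborhood/Path verification to be the genuinely delicate step, whereas D1--D4 and the width bound should be routine once the correspondence between components/groups and subgraphs/labels is fixed.
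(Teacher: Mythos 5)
Your construction is the same as the paper's: the components of $T_i$ are the vertex sets $V(G_q)$ of the subexpressions completed after the $i$-th layer of $\oplus$-nodes counted from the leaves, the groups are the label classes of those $G_q$, conditions D1--D4 and the width bound $\leq k$ follow immediately, and the Edge/Neighborhood/Path properties are checked against the semantics of the $\eta$-operation. The one step whose justification would fail as written is your argument for the Neighborhood (and Path) property: you claim that if $v,w$ share a group at stage $i$ then ``they have had identical label histories up to that point.'' That is false --- groups only \emph{coarsen} along the derivation, so $v$ and $w$ may carry different labels at every earlier stage and be merged into one label class only at stage $i$ by a $\rho$-operation (e.g.\ $\rho_{2\rightarrow 1}(1(v)\oplus 2(w))$). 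The inference you need runs in the opposite time direction: because $v$ and $w$ share a label in $G_q$, where $q$ is the $\oplus$-node realizing their common component at level $i$, they share a label in $G_p$ for every node $p$ \emph{above} $q$, so every $\eta$-operation applied above $q$ treats them identically; hence if $uv\in E$ but $uw\notin E$, the edge $uv$ cannot be introduced at or above $q$ and must already be present in a child subgraph of $q$, which places $u$ and $v$ in a common component of $T_{i-1}$. With the direction corrected, your Edge-property argument goes through, and the Path property follows the paper's route as well (there one first invokes the already-established Neighborhood property to put all four vertices into a single component of $T_i$ before arguing that the $\eta$-node introducing $uv$ would also introduce $xw$); the remainder of your proposal matches the paper's proof.
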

\begin{proof}
  Let $\phi$ be a $k$\hy expression of $G=(V,E)$ and let $Q$ be the
  corresponding succinct $k$\hy expression tree with root $r$.  For a
  node $q \in V(Q)$ let $R(q)$ denote the number of $\oplus$-nodes
  that appear on the path from $r$ to $q$; thus $R(r)=1$.  We write
  $U$ and $L$ for the set of $\oplus$\hy nodes and the set of leaves
  of $Q$, respectively.  We let $t:=\max_{q\in L} R(q)$.  For $0\leq i
  \leq t$ we define $U_i=\SB q\in U \SM R(q)=t-i+1 \SE$ and $L_i=\SB
  q\in L \SM R(q)< t-i+1 \SE$. We observe that for each $v\in V$ and
  $1\leq i \leq t$ there is exactly one $q\in U_i\cup L_i$ such that
  $v\in G_q$

  We define a derivation $\DDD=(T_0,\dots,T_t)$ as follows.   For
  $0 \leq i\leq t$ we
  put $\comps(T_i)=\SB V(G_q) \SM q\in U_i \cup L_i\SE$ and
  $\groups(T_i)=\bigcup_{q\in U_i \cup L_i} \groups(G_q)$ where
  $\groups(G_q)$ denotes the partition of $V(G_q)$ into sets of
  vertices that have the same label.  By construction, $\DDD$ is a
  derivation with universe $V$. Furthermore, since $\phi$ is a $k$\hy
  expression, $\Card{\groups(G_q)}\leq k$ for all nodes $q$ of
  $Q$. Hence $\DDD$ is a $k$\hy derivation.
  It remains to show that~$\DDD$ is a $k$\hy derivation
    of~$G$. Let $1\leq i \leq t$.
 
    To show that the \emph{edge property} holds, consider two vertices
    $u,v\in V$ such that $uv\in E$ and $u,v$ are in the same group
    in~$T_i$.  Assume to the contrary that $u,v$ belong to different
    components $c_1,c_2$ in~$T_{i-1}$.  Since $u,v$ are in the same
    group in~$T_i$, they are also in the same component of $T_i$.
    Hence there is an $\oplus$\hy node $q\in U_i$ with $u,v\in
    V(G_q)\in \comps(T_i)$. Let $q_1,q_2$ be the children of $q$ with
    $V(G_{q_1})=c_1$ and $V(G_{q_2})=c_2$. Hence $uv\notin
    E(G_{q_1})\cup E(G_{q_1})$. However, since $u,v$ are in the same
    group in $T_i$, this means that $u,v$ have the same label in
    $G_q$. Thus the edge $uv$ cannot be introduced by an $\eta$\hy
    operation, and so $uv\notin E(G_{r})=E$, a contradiction. Hence
    the edge property holds.

    To show that the \emph{neighborhood property} holds, consider
    three vertices $u,v,w\in V$ such that $uv\in E$, $uw\notin E$, and
    $v,w$ are in the same group of~$T_i$. Assume to the contrary that
    $u,v$ are in different components of~$T_{i-1}$, say in components
    $c_1$ and $c_2$, respectively.  Since $v,w$ are in the same group
    of~$T_i$, they are also in the same component $c$ of $T_i$. Let
    $q\in U_i$ be the $\oplus$\hy node such that $v,w\in V(G_q)=c\in
    \comps(T_i)$, and let $q_1,q_2$ be the children of $q$ with
    $V(G_{q_1})=c_1$ and $V(G_{q_2})=c_2$.  Clearly $uv\notin
    E(G_{q_1})\cap E(G_{q_2})$, hence there must be an $\eta$\hy node
    $p$ somewhere on the path between $q$ and $r$ where the edge $uv$
    is introduced. However, since~$v$ and $w$ share the same label in
    $G_q$, they share the same label in $G_p$. Consequently, the
    $\eta$\hy operation that introduces the edge $uv$ also introduces
    the edge $uw$. However, this contradicts the assumption that
    $uw\notin E$. Hence the neighborhood property holds as well.

    To show that the \emph{path property} holds, we proceed
    similarly. Consider four vertices $u,v,w,x\in V$, such that $uv,
    uw, vx \in E$ and $wx \notin E$. Assume that $u,x$ are in the same
    group in $T_i$ and $v,w$ are in the same group in $T_i$. Assume to
    the contrary that $u,v$ are in different components of~$T_{i-1}$,
    say in components $c_1$ and $c_2$, respectively. Above we have
    shown that the neighborhood property holds. Hence we conclude that
    $u,w$ belong to the same component of $T_{i-1}$, and $v,x$ belong
    to the same component of $T_{i-1}$.  Since $u,x$ are in the same
    group in $T_i$, they are also in the same component of~$T_i$, say
    in component~$c$.  Since $u,w$ belong to the same component of
    $T_{i-1}$, they also belong to the same component of $T_i$, thus
    $w\in c$. By a similar argument we conclude that $v\in c$.  Thus
    all four vertices $u,v,w,x$ belong to~$c$.  Let $q\in U_i$ be the
    $\oplus$\hy node with $V(G_q)=c\in \comps(T_i)$, and let $q_1,q_2$
    be the children of $q$ with $V(G_{q_1})=c_1$ and $V(G_{q_2})=c_2$.
    Clearly $uv\notin E(G_{q_1})\cup E(G_{q_2})$, hence there must be
    an $\eta$\hy node $p$ somewhere on the path between $q$ and $r$
    where the edge $uv$ is introduced. However, since $v$ and $w$
    share the same label in $G_q$, and $u$ and $x$ share the same
    label in $G_q$, this also holds in~$G_p$. Hence the $\eta$\hy
    operation that introduces the edge $uv$ also introduces the edge
    $xw$. However, this contradicts the assumption that $xw\notin
    E$. Hence the path property holds as well. We conclude that~$\DDD$
    is indeed a $k$\hy derivation of~$G$.

    The above procedure for generating the $k$\hy derivation can clearly
    be carried out in polynomial time.
\end{proof}

\begin{example}\label{ex:expression->derivation}
  Consider the $3$\hy expression $\phi$ for the graph $P_4$ of
  Example~\ref{ex:p4}. Applying the procedure described in the proof
  of Lemma~\ref{lem:expression->derivation} we obtain the $3$\hy
  derivation $\DDD$ of Example~\ref{ex:derivation}.
\end{example}

\begin{lemma}\label{lem:derivation->expression}
  From a $k$\hy derivation of a graph $G$ we can obtain a $k$\hy
  expression of~$G$ in polynomial time.
\end{lemma}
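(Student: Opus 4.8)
The plan is to reverse the construction of Lemma~\ref{lem:expression->derivation}. Reading a given $k$\hy derivation $\DDD=(T_0,\dots,T_t)$ of $G$ from $T_0$ upward, I build a $k$\hy expression for each component, using a $\oplus$\hy node exactly where the derivation merges components. By Lemmas~\ref{lem:make-strict} and~\ref{lem:strict-short} I may assume $\DDD$ is strict, so $t\le n-1$ and the resulting expression has polynomial size. Proceeding by induction on $i$, I construct for every component $C\in\comps(T_i)$ a $k$\hy expression $\phi_{i,C}$ that defines $G[C]$ and labels its vertices so that two vertices share a label if and only if they lie in the same group of $T_i$; since the width of $C$ is at most $k$, there are at most $k$ groups inside $C$, so $\phi_{i,C}$ uses at most $k$ labels. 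The base case ($i=0$) is trivial, as every component of $T_0$ is a singleton $\{v\}$ with $\phi_{0,\{v\}}=1(v)$, and $\phi_{t,V}$ is the desired expression for $G$.

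For the induction step, a component $C\in\comps(T_i)$ is, by D3, the disjoint union of the components $C_1,\dots,C_m\in\comps(T_{i-1})$ it contains. I first relabel each $\phi_{i-1,C_j}$ so that every vertex carries the label of its group in $T_i$ (possible because $\groups(T_{i-1})$ refines $\groups(T_i)$ by D4, so this only merges labels), then join the $C_j$ with a single $\oplus$\hy node, and finally apply $\eta_{h,h'}$ for every pair of distinct groups $h,h'$ of $T_i$ inside $C$ for which some $u\in h$ and $v\in h'$ in different components of $T_{i-1}$ satisfy $uv\in E$. Edges inside a single $C_j$ are already present, so correctness reduces to the following regularity claim: \emph{whenever $h\ne h'$ are groups of $T_i$ inside $C$ and some edge $uv\in E$ joins them across two components of $T_{i-1}$, then $u'v'\in E$ for all $u'\in h$ and $v'\in h'$.} Granting the claim, each applied $\eta_{h,h'}$ inserts only genuine edges and never touches a non-edge inside a component; conversely the edge property forces every cross\hy component edge $uv$ to have its endpoints in distinct groups of $T_i$, so it is inserted by the corresponding $\eta_{h,h'}$. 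Thus the constructed graph is exactly $G[C]$ and the labels still record the groups of $T_i$, closing the induction.

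Proving the regularity claim is the conceptual core, and this is where all three defining properties are used. The edge property yields the base observation that no edge joins two vertices of the same group of $T_i$ lying in different components of $T_{i-1}$. Two applications of the neighborhood property then propagate the given edge $uv$ one endpoint at a time, giving $u'v\in E$ for every $u'\in h$ and $uv'\in E$ for every $v'\in h'$: a missing such edge, together with the shared group, would by the neighborhood property place two vertices from distinct components of $T_{i-1}$ into a common component. The only configuration these two steps leave open is a would\hy be non\hy edge $u'v'$ in which both endpoints have been replaced simultaneously, and this is precisely what the path property rules out: instantiating it on $u,v,u',v'$ (with $u'v'$ as the forbidden non\hy edge and the two coincidences ``$u,u'$ in group $h$'' and ``$v,v'$ in group $h'$'' as its two same\hy group hypotheses) again forces two vertices in different components of $T_{i-1}$ to coincide, a contradiction. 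I expect this last case to be the main obstacle, since it is the only one that genuinely needs the path property rather than the simpler neighborhood property.

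The remaining point requiring care is purely a labeling issue: relabeling a child $\phi_{i-1,C_j}$ to the $T_i$\hy group labels must be realized within the budget of $k$ labels, and since relabeling can only merge labels, a ``crossed'' identification (as already occurs for $K_{2,2}$) cannot be fixed up locally at the combine step. This is handled by fixing, once and for all, a single assignment of labels to the family of all groups occurring in $\DDD$: because these groups form a forest under inclusion (a group of $T_{i-1}$ lies in a unique group of $T_i$), one can propagate the requirement ``groups merged into a common group of $T_i$ receive the same label'' down the forest while keeping groups of a common component distinctly labeled, so that every relabeling needed at a combine step can be realized within the $k$ available labels. With such an assignment fixed, each step of the construction runs in polynomial time and, since $t\le n-1$, so does the whole procedure.
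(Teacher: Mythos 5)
Your construction is essentially the paper's own proof recast bottom-up: your ``regularity claim'' and its three-step verification (the neighborhood property applied twice to replace one endpoint at a time, then the path property applied to $u,v,v',u'$, each application contradicting the assumption that $uv$ crosses two components of $T_{i-1}$) is exactly the paper's Cases 1--4 for the correctness of the $\eta$-nodes, and your $\oplus$/$\rho$/$\eta$ assembly produces the same expression tree that the paper builds in three phases. The one step in your write-up that fails as stated is the labelling fix at the end. The global assignment you posit --- ``groups merged into a common group of $T_i$ receive the same label,'' propagated down the inclusion forest ``while keeping groups of a common component distinctly labeled'' --- does not exist in general: two groups lying in the \emph{same} component of $T_{i-1}$ may merge into a single group of $T_i$ (already in Example~\ref{ex:derivation}, $\{a\}$ and $\{b\}$ are distinct groups of the component $\{a,b,c\}$ in $T_2$ but one group of $T_3$), and then your two requirements contradict each other.

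The repair is what the paper's top-down, depth-first pass over the $\oplus$-nodes implicitly achieves: choose the labels of the groups of $T_i$ within a component $c$ \emph{before} choosing those of its children, and when assigning distinct labels to the groups of $T_{i-1}$ inside a child component $c'$, designate for each group $h$ of $T_i$ meeting $c'$ one sub-group that inherits the label of $h$ and give the remaining sub-groups arbitrary unused labels (possible since $c'$ has at most $k$ groups). Every relabelling needed at the combine step is then a collection of merges onto fixed points of the label map, with no cycles in its functional graph, hence realizable by at most $k$ operations $\rho_{a\rightarrow f(a)}$; the ``crossed $K_{2,2}$'' permutation you rightly worry about never arises. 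With that adjustment your argument is complete and coincides with the paper's.
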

\begin{proof}
  Let $\DDD=(T_0,\dots,T_t)$ be a $k$\hy derivation of $G=(V,E)$.
  Using the construction of the proof of Lemma~\ref{lem:make-strict}
  we can obtain a strict $k$\hy derivation of $G$ from any given
  $k$\hy derivation of $G$. Hence we may assume, w.l.o.g., that
  $\DDD$ is strict. Let $C=\bigcup_{i=0}^t \comps(T_i)$.
  We are going to construct in polynomial time a $k$\hy
  expression tree for $G$, which can clearly be turned into a $k$\hy
  expression for $G$ in polynomial time.

  We proceed in three steps.

  First we construct a $k$\hy expression tree $Q_\oplus$ that only
  contains $\oplus$\hy nodes and leaves. For each component $c=\{v\}$ of
  $T_0$ we introduce a leaf $q(c,0)$ with label $1(v)$.
  For each $1\leq i \leq t$ and
  each component $c\in \comps(T_i)$ we introduce an $\oplus$\hy node
  $q(c,i)$. We add edges to $Q_\oplus$ such that $q(c',i-1)$ is a
  child of $q(c,i)$ if and only if $c'\subseteq c$.  Properties D1
  and D3 of a derivation ensure that $Q_\oplus$ is a tree. Note that
  $Q_\oplus$ is not necessarily succinct, and may contain $\oplus$\hy
  nodes that have only one child.

  In the next step we add to $Q_\oplus$ certain $\rho$\hy nodes to
  obtain the $k$\hy expression tree $Q_{\oplus,\rho}$.  We visit the
  $\oplus$\hy nodes of $Q_\oplus$ in a depth-first ordering. Let
  $q(c,i)$ be the currently visited node. Between $q(c,i)$ and each
  child $q(c',i-1)$ of $q(c,i)$ we add at most $k$ $\rho$\hy nodes (so
  that the edge between $q(c,i)$ and $q(c',i-1)$ becomes a path) such
  that afterwards $q(c,i)$ has a child $q'$ with $\groups(G_{q'})=\SB
  g\in \groups(G_{q(c,i)}) \SM g\subseteq c \SE \subseteq
  \groups(T_i)$. This is possible because of properties D2 and D4 of a
  derivation.
 
  As a final step, we add $\eta$\hy nodes to $Q_{\oplus,\rho}$ and the
  $k$\hy expression tree~$Q$. Let $uv\in E$ be an edge of $G$. We show
  that there is an $\oplus$\hy node $q$ in $Q_{\oplus,\rho}$ above
  which we can add an $\eta$\hy node $p$ ($q$ is a child of $p$) which
  introduces edges including $uv$ but does not introduce any edge not
  present in~$E$ .

  Let $q(c,i)$ be the $\oplus$\hy node of $Q_{\oplus,\rho}$ with
  smallest $i$ such that $u,v \in V(G_{q(c,i)})$. We write $q=q(c,i)$
  and $c=V(G_q)$ and observe that $c\in \comps(T_i)$.  Among the
  children of $q$ are two distinct nodes $q_1,q_2$ such that $u\in
  V(G_{q_1})$ and $v\in V(G_{q_2})$.  It follows that there are
  distinct components $c_1,c_2\in T_{i-1}$ with $u\in c_1$ and $v\in
  c_2$. By the edge property, $u$ and $v$ belong to different groups
  of $T_i$, and so $u$ and $v$ have different labels in $G_q$, say the
  labels $a$ and $b$, respectively.  We add an $\eta$\hy node $p$
  above $q$ representing the operation $\eta_{a,b}$. This inserts the
  edge $uv$ to $G_q$. We need to show that $\eta_{a,b}$ does not add
  any edge that is not in~$E$.  We show that for all pairs of vertices
  $u',v' \in c$ where $u'$ has label $a$ and $v'$ has label $b$ in
  $G_q$, the edge $u'v'$ is in $E$.

  We consider four cases.

  Case 1: $u=u', v =v'$. Trivially,  $u'v'=uv\in E$.

  Case 2: $u=u', v\neq v'$.   Assume to the contrary that
  $u'v'\notin E$. Since $v$ and $v'$ have the same label in
  $G_q$, they belong to the same group of $T_i$. The neighborhood
  property implies that $u$ and $v$ belong to the same component of
  $T_{i-1}$, a contradiction to the minimal choice of~$i$. Hence
  $u'v'\in E$.

  Case 3: $u\neq u', v= v'$. This case is symmetric to Case~2.
 
  Case 4: $u\neq u', v\neq v'$.  Assume to the contrary that $u'v'
  \notin E$.  It follows by from Cases~2 and~3 that $uv',vu'\in
  E$. The path property implies that $u$ and $v$ belong to the same
  component of $T_{i-1}$, a contradiction to the minimal choice
  of~$i$. Hence $u'v'\in E$.

  Consequently, we can successively add $\eta$\hy nodes to
  $Q_{\oplus,\rho}$ until all edges of $E$ are inserted, but no edge
  outside of $E$. Hence we obtain indeed a $k$\hy expression tree for
  $G$.

  This procedure for generating the $k$\hy expression tree can clearly
  be carried out in polynomial time, hence the lemma follows.
\end{proof}

We note that we could have saved some $\rho$\hy operations in the
proof of Lemma~\ref{lem:derivation->expression}.  In particular the
$k$\hy expression produced may contain $\rho$\hy operations where the
number of different labels before and after the application of the
$\rho$\hy operation remains the same. It is easy to see that such a
$\rho$\hy operations can be omitted if we change labels of some
initial $k$\hy graphs accordingly.

\begin{example}
  Consider the derivation $\DDD$ of graph $G$ in
  Example~\ref{ex:derivation}.  We construct a $3$\hy expression of
  $G$ using the procedure as described in the proof of
  Lemma~\ref{lem:derivation->expression}, however, to save space, we
  give the construction in terms of $k$\hy expressions instead of
  $k$\hy expression trees.  First we obtain $\phi_\oplus= ((1(a)\oplus
  1(b)) \oplus 1(c)) \oplus 1(d)$.  Next we insert $\rho$ operations
  to represent how the groups evolve through the derivation:
  $\phi_{\oplus,\rho}= \rho_{1\rightarrow 2} ( ( 1(a)\oplus
  \rho_{1\rightarrow 2}(1(b) ) \oplus \rho_{1\rightarrow 3 } 1(c) ) )
  \oplus 1(d)$.  Finally we add $\eta$ operations, and obtain
  $\phi_{\oplus,\rho,\eta}= \eta_{1,2}( \rho_{1\rightarrow 2}(
  \eta_{2,3}( \eta_{1,2}( 1(a)\oplus \rho_{1\rightarrow 2}(1(b) ) )
  \oplus \rho_{1\rightarrow 3 } 1(c) ) ) \oplus 1(d) )$.
\end{example}

By Lemma~\ref{lem:strict-short} we do not need to search for $k$\hy
derivations of length $>n-1$ when the graph under consideration has
$n$ vertices.  The next lemma improves this bound to $n-k+1$ which
provides a significant improvement for our SAT encoding, especially if
the graph under consideration has large clique-width.

\begin{lemma}\label{lem:strict-shorter}
  Let $1\leq k\leq n$.  If a graph with $n$ vertices has a $k$\hy
  derivation, then it has a $k$\hy derivation of length $n-k+1$.
\end{lemma}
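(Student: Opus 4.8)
The plan is to prove the apparently stronger-looking statement that $G$ has a $k$\hy derivation of length \emph{at most} $n-k+1$, and then pad it up to length exactly $n-k+1$. Padding is harmless: given a $k$\hy derivation $(T_0,\dots,T_\ell)$ of $G$ with $\ell<n-k+1$, inserting extra copies of the first template at the front gives $(T_0,\dots,T_0,T_1,\dots,T_\ell)$, and one checks directly that D1--D4 and the edge, neighborhood, and path properties survive: every newly created step compares $T_0$ with $T_0$, whose groups are singletons, so all hypotheses of the form ``$u,v$ lie in a common group'' are vacuous. For $k\in\{1,2\}$ nothing more is needed, since a strict $k$\hy derivation already has length at most $n-1\le n-k+1$ by Lemmas~\ref{lem:make-strict} and~\ref{lem:strict-short}. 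Hence the real content is the upper bound for $k\ge 3$.

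I would establish the bound length $\le n-k+1$ by induction on $n$ for fixed $k$. In the base case $n=k$ a single step suffices: take $\DDD=(T_0,T_1)$, where $T_1$ has the single component $V$ together with $k$ singleton groups. This derivation has width exactly $k$, and it is a $k$\hy derivation of \emph{every} graph on these $k$ vertices -- in particular of $G$ -- because the three properties only constrain pairs lying in a common group of a template, and here every group is a singleton. Its length is $1=n-k+1$, as required.

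For the inductive step ($n>k$) I would pass to a $k$\hy expression of $G$, which exists by Lemma~\ref{lem:derivation->expression}, and work with its expression tree, since by the construction in the proof of Lemma~\ref{lem:expression->derivation} the length of the induced derivation equals the maximum number of $\oplus$\hy nodes on a root-to-leaf path. Choose a vertex $v$ whose leaf sits at maximum $\oplus$\hy depth. Deleting $v$ yields a $k$\hy expression of $G-v$, a graph on $n-1\ge k$ vertices of clique-width at most~$k$; by the induction hypothesis $G-v$ has a $k$\hy derivation of length at most $(n-1)-k+1=n-k$. The aim is then to re-insert $v$ using only \emph{one} additional merging step, so that the total length is at most $n-k+1$.

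The re-insertion is the main obstacle, and it is exactly where the three properties (equivalently, the width bound) must be controlled. The difficulty is that clique-width expressions do not rebalance freely: to attach $v$ with a single disjoint-union step one needs $v$'s neighborhood $N(v)$ to coincide with a union of label classes at the node where $v$ is merged in, so that one $\eta$\hy operation (after suitable relabeling) inserts precisely the edges at $v$ and no others. Since the recursively obtained short derivation of $G-v$ may carry a labeling unrelated to how $v$ attaches in $G$, the real work is to coordinate the two: either by choosing $v$ so that it attaches near the top and keeping its insertion there while the rest of $G-v$ is restructured, or by strengthening the induction hypothesis so that it produces a derivation of $G-v$ whose terminal groups are compatible with the distinctions $v$ must make (at least refining the split of $V\setminus\{v\}$ into $N(v)$ and its complement). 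Checking that this single extra step preserves the edge, neighborhood, and path properties, and in particular never forces the width above $k$, is the crux; the remaining bookkeeping (the padding above, and the routine polynomial-time considerations) is straightforward.
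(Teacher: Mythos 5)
Your proposal has a genuine gap: the inductive step is not carried out, and you say so yourself (``the re-insertion is the main obstacle\dots is the crux''). The padding argument and the base case $n=k$ are fine, but they are the trivial parts. In the inductive step you obtain, from the induction hypothesis, \emph{some} $k$\hy derivation of $G-v$ of length at most $n-k$, with no control over which groups and components it uses; there is no reason its terminal (or any intermediate) groups separate $N(v)$ from $V\setminus(N(v)\cup\{v\})$ in the way the edge and neighborhood properties would require once $v$ is merged in. Note also that $v$ cannot simply be kept in a singleton group until the end: once $v$ lies in a component of size greater than $k$, the width bound forces $v$'s group to be merged with others, and at that moment all three properties become non-vacuous for $v$. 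So the single extra step you hope to append is exactly where the whole difficulty of the lemma lives, and the proposal offers only candidate strategies (``strengthening the induction hypothesis\dots'') without establishing any of them. As written, this is a proof outline with the central lemma missing, not a proof.

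For comparison, the paper avoids induction on $n$ and never touches $k$\hy expressions in this proof. It defines the \emph{$k$\hy length} of a strict derivation as the number of templates containing a component of size greater than $k$ (a suffix of the derivation), and shows by an extremal/counting argument (Claims 1--3) that this suffix has length at most $n-k$: in an extremal derivation the first ``large'' template has exactly one component of size $k+1$ and $n-k-1$ singletons, hence $n-k$ components, and strictness forces at most $n-k-1$ further steps. The entire prefix, where every component has size at most $k$, is then compressed into a \emph{single} step by replacing it with one template $T_j'$ that has the components of $T_j$ but all groups singletons; this keeps the width at most $k$ precisely because those components have size at most $k$, and the three properties hold vacuously for $T_j'$ and are unaffected for $T_{j+1}$. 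You might try to salvage your induction along these lines, but the compression argument is both shorter and sidesteps the coordination problem you ran into.
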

\begin{proof}
  Let $k\geq 1$ be fixed.  We define the \emph{$k$\hy length} of a
  derivation as the number of templates that contain at least one
  component of size larger than $k$ (these templates form a suffix of
  the derivation).  Let~$\ell(n,k)$ be the largest $k$\hy length of a
  strict derivation over a universe of size $n$.  Before we show the
  lemma, we establish three claims. For these claims, the groups of
  the considered derivations are irrelevant and hence we will be ignored.

  \smallskip\noindent\emph{Claim 1:}   $\ell(n,k)<\ell(n+1,k)$.
  
  To show the claim, consider a strict derivation
  $\DDD=(T_0,\dots,T_t)$ over a universe $V$ of size $n$ with $k$\hy
  length $\ell$. We take a new element $a$ and form a strict
  derivation $\DDD'$ over the universe $V \cup \{a\}$ by adding the
  singleton $\{a\}$ to $\comps(T_i)$ for $0\leq i \leq t$ and adding a
  new template $T_{t+1}$ with $\comps(T_{t+1})= \{V \cup \{a\}\}$. The
  new derivation $\DDD'$ has $k$\hy length $\ell+1$.

  \smallskip\noindent\emph{Claim 2:} Let $\DDD=(T_0,\dots,T_t)$ be a
  strict derivation over a universe $V$ of size $n$ of $k$\hy length
  $\ell(n,k)$. Then,  $T_{t-\ell(n,k)+1}$ has exactly one component of
  size $k+1$ and all other components are singletons.

  We proceed to show the claim. Let $j=t-\ell(n,k)$, and observe that
  $j$ is the largest index where all components of $T_j$ have size at
  most $k$. Let $c_1,\dots,c_r$ be the components of $T_{j+1}$ of size
  greater than $1$ such that $\Card{c_1} \geq \Card{c_2} \geq \dots
  \geq \Card{c_r}$. Thus $\Card{c_1}>k$.  We show that $r=1$. Assume
  to the contrary that $r>1$. We pick some element $a_i\in c_i$,
  $2\leq i \leq r$, and set $X= \bigcup_{i=2}^r c_i\setminus
  \{a_i\}$. The derivation $\DDD$ induces a strict derivation $\DDD'$
  over the universe $V'=V\setminus X$. Observe that
  $n'=\Card{V'}<\Card{V}=n$.  Evidently $\DDD'$ has the same $k$\hy
  length as $\DDD$, hence $\ell(n',k)\geq \ell(n,k)$, a contradiction
  to Claim~1. Hence $r=1$, and $c_1$ is the only component in
  $T_{j+1}$ of size greater than $k$, all other components of
  $T_{j+1}$ are singletons. We show that $\Card{c_1}=k+1$. We assume
  to the contrary that $\Card{c_1}>k+1$. We pick $k+1$ elements
  $b_1,\dots,b_{k+1} \in c_1$ and set $X=c_1\setminus
  \{b_1,\dots,b_{k+1}\}$.  Similarly as above, we observe that $\DDD$
  induces a strict derivation $\DDD''$ over the universe
  $V''=V'\setminus X$, and that $\DDD''$ has the same $k$\hy length as
  $\DDD$. Since $\Card{V''}<\Card{V}$ we have again a contradiction
  to~Claim~1.  Hence Claim~2 is established.

  \smallskip\noindent\emph{Claim 3:} $\ell(n,k)\leq n-k$. 

  To see the claim, let $\DDD=(T_0,\dots,T_t)$ be a strict derivation
  over a universe $V$ of size $n$ of $k$\hy length $\ell(n,k)$.  Let
  $j=t-\ell(n,k)$.  By Claim~2 we know that $T_{j+1}$ has exactly one
  component of size $k+1$ and all other components are singletons
  (hence there are $n-k-1$ singletons).  We conclude that
  $\Card{\comps(T_{j+1})}=n-k$. Since $\DDD$ is strict, we have 
  $n-k=\Card{\comps(T_{j+1})}>\Card{\comps(T_{j+2})}>\dots>
  \Card{\comps(T_{t})}=1$. Thus $\ell(n,k)=t-j\leq n-k$, and the claim
  follows.

  \medskip\noindent We are now in the position to establish the
  statement of the lemma.  Let $\DDD=(T_0,\dots,T_t)$ be a $k$\hy
  derivation of a graph $G=(V,E)$ with $\Card{V}=n$. By
  Lemma~\ref{lem:make-strict} we may assume that $\DDD$ is strict.
  Let~$\ell$ be the $k$\hy length of $\DDD$ and let $j=t-\ell$. By
  Claim~3 we know that $\ell\leq n-k$.  We define a new template
  $T_j'$ with $\comps(T_j')=\comps(T_j)$ and
  $\groups(T_j')=\groups(T_0)$, and we set
  $\DDD'=(T_0,T_j',T_{j+1},\dots,T_t)$. We claim that $\DDD'$ is a
  $k$\hy derivation of $G$. Clearly $\DDD'$ is a derivation, but we
  need to check the edge, neighborhood, and path property for $T_j'$
  and $T_{j+1}$ in $\DDD'$. The properties hold trivially for $T_j'$
  since all its groups are singletons. For $T_{j+1}$ the properties
  hold since $T_j'$ has the same components as $T_j$.  Thus $\DDD'$ is
  indeed a $k$\hy derivation of~$G$. The length of $\DDD'$ is
  $\ell+1\leq n-k+1$, hence the lemma follows.
\end{proof}

\begin{example}
  Again, consider the derivation $\DDD$ of Example~\ref{ex:derivation}.
  $\DDD$ defines $P_4$ which has clique-width
  3~\cite{CourcelleOlariu00}.  According to
  Lemma~\ref{lem:strict-shorter}, it should have a derivation of
  length $n - k + 1 = 4 - 3 + 1 = 2$.  We can obtain such a derivation
  by removing $T_1$ from $\DDD$, which gives $\DDD' = (T_0, T_2, T_3)$.
\end{example}

By combining Lemmas \ref{lem:expression->derivation},
\ref{lem:derivation->expression}, and \ref{lem:strict-shorter}, we
arrive at the main result of this section.
\begin{proposition}\label{pro:expression}
  Let $1\leq k \leq n$. A graph $G$ with $n$ nodes has clique-width at
  most $k$ if and only if $G$ has a $k$\hy derivation of length at
  most $n-k+1$.
\end{proposition}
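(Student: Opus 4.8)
The plan is straightforward: Proposition~\ref{pro:expression} is essentially a corollary of the three preceding lemmas, so I would prove it by chaining them together in the two directions of the biconditional. The key observation is that clique-width and $k$\hy derivations are two sides of the same coin, with Lemmas~\ref{lem:expression->derivation} and~\ref{lem:derivation->expression} providing the effective translations between $k$\hy expressions and $k$\hy derivations, while Lemma~\ref{lem:strict-shorter} controls the length. I would carry out the argument in the order forward direction, then backward direction, being careful to invoke the length bound exactly once, at the right moment.

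For the forward direction, I would assume $\cwd(G)\leq k$, so that $G$ has a $k$\hy expression by definition of clique-width. Applying Lemma~\ref{lem:expression->derivation} gives a $k$\hy derivation of~$G$; this derivation has no a~priori length bound, so I would then invoke Lemma~\ref{lem:strict-shorter} to convert it into a $k$\hy derivation of length at most $n-k+1$. Here I should note that Lemma~\ref{lem:strict-shorter} as stated produces a derivation of length exactly $n-k+1$, which of course is a derivation of length \emph{at most} $n-k+1$, so the conclusion follows. For the backward direction, I would assume $G$ has a $k$\hy derivation (of any length, the length hypothesis being irrelevant here) and apply Lemma~\ref{lem:derivation->expression} to obtain a $k$\hy expression of~$G$, whence $\cwd(G)\leq k$ by definition.

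The main subtlety to watch for is the hypothesis $1\leq k\leq n$, which is needed so that the quantity $n-k+1$ is a sensible (positive) length and so that Lemma~\ref{lem:strict-shorter} applies; I would make sure this range assumption is carried through from the statement. A second thing to verify is that the translation lemmas genuinely preserve the graph $G$ (not merely the width parameter~$k$), which they do by construction. I do not anticipate any real obstacle, since all the combinatorial work has already been done in the lemmas; the proposition merely packages them into the clean equivalence ``$\cwd(G)\leq k$ iff $G$ has a short $k$\hy derivation.'' The only place requiring a moment's thought is confirming that the polynomial-time effectiveness of the lemmas is not needed for the equivalence itself (it is an extra feature), so the proof of the proposition is a pure existence argument.
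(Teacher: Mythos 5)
Your proposal is correct and matches the paper exactly: the paper proves Proposition~\ref{pro:expression} simply by combining Lemmas~\ref{lem:expression->derivation}, \ref{lem:derivation->expression}, and~\ref{lem:strict-shorter}, which is precisely the chaining you describe (expression $\to$ derivation $\to$ short derivation for the forward direction, derivation $\to$ expression for the converse). Your side remarks---that the length bound is only needed in the forward direction, and that a derivation of length exactly $n-k+1$ is in particular one of length at most $n-k+1$---are accurate and fill in the small details the paper leaves implicit.
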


 
\section{Encoding a Derivation of a Graph}
Let $G=(V,E)$ be graph, and $t>0$ an integer. We
are going to construct a CNF formula $\Fder(G,t)$ that is satisfiable if
and only if $G$ has a derivation of length~$t$.
We assume that the vertices of $G$ are given in some arbitrary but
fixed linear order~$<$.  

For any two distinct vertices $u$ and $v$ of $G$ and any $0\leq i \leq
t$ we introduce a {\em component variable} $c_{u,v,i}$.  Similarly, for any two
distinct vertices $u$ and $v$ of $G$ with $u<v$ and any $0\leq i \leq
t$ we introduce a {\em group variable} $g_{u,v,i}$. Intuitively, $c_{u,v,i}$ or
$g_{u,v,i}$ are true if and only if $u$ and $v$ are in the same component or group, respectively,
in the $i$th template of an implicitly represented derivation of $G$.

The formula $\Fder(G,t)$ is the conjunction of all the clauses described
below.

\noindent The following clauses represent the conditions~D1--D4.
\begin{myquote}
$(\bar c_{u,v,0}) \land (c_{u,v,t})
\land
(c_{u,v,i} \lor \bar g_{u,v,i}) 
\land
(\bar c_{u,v,i -1 } \lor c_{u,v,i})
\land
(\bar g_{u,v,i -1 } \lor g_{u,v,i})$ \\
 \phantom{x}\hfill for $u,v \in V$, $u<v$, $0\leq i \leq t$.
\end{myquote}

We further add clauses that ensure that the  relations of being in the
same group  and of being in the same component are transitive.
\begin{myquote}
  $(\bar c_{u,v,i} \lor \bar c_{v,w,i} \lor c_{u,w,i}) \land (\bar
  c_{u,v,i} \lor \bar c_{u,w,i} \lor c_{v,w,i}) \land (\bar c_{u,w,i}
  \lor \bar c_{v,w,i} \lor c_{u,v,i})\;\; \land $ 

$(\bar g_{u,v,i} \lor \bar g_{v,w,i} \lor g_{u,w,i}) \land (\bar g_{u,v,i} \lor \bar g_{u,w,i} \lor g_{v,w,i}) \land (\bar g_{u,w,i} \lor \bar g_{v,w,i} \lor g_{u,v,i})$ \\
\phantom{x}\hfill for $u,v,w\in V$, $u<v<w$, $0\leq i \leq t$.
\end{myquote}

\noindent In order to enforce the \emph{edge property} we add the
following clauses for any two vertices $u,v\in V$ with 
$u<v$, 
$uv\in E$ and
$1\leq i \leq t$:
\begin{myquote}
  $(c_{u,v,i-1} \lor \bar g_{u,v,i})$.
\end{myquote}
Further, to enforce the \emph{neighborhood property}, we add for any
three vertices $u,v,w\in V$ with $uv \in E$ and $uw \notin E$ and
$1\leq i \leq t$, the following clauses.
\begin{myquote}
  $(c_{\min(u,v),\max(u,v),i-1} \lor \bar g_{\min(v,w),\max(v,w),i})$ 
  


 \end{myquote}
Finally, to enforce the \emph{path property} we add for any four
vertices $u,v,w,x$, such that  $uv, uw, vx \in E$, and $wx \notin E$, $u < v$
and $1\leq i \leq t$ 
the following clauses:
\begin{myquote}
  ($c_{u,v,i-1} \lor \bar g_{\min(u,x),\max(u,x),i} \lor \bar g_{\min(v,w),\max(v,w),i})$ 



\end{myquote}

\medskip\noindent
The following statement is a direct consequence of the above
definitions.
\begin{lemma}\label{lem:der}
  $\Fder(G,t)$ is satisfiable if and only if $G$ has a derivation of
  length~$t$.
\end{lemma}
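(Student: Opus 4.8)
The plan is to exhibit a direct, clause-by-clause correspondence between the satisfying assignments of $\Fder(G,t)$ and the derivations of $G$ of length $t$, reading each family of clauses as the propositional transcription of exactly one defining condition. The dictionary is dictated by the intended meaning of the variables: a template $T_i$ records, for each pair of distinct vertices $u<v$, whether they lie in the same component (variable $c_{u,v,i}$) and whether they lie in the same group (variable $g_{u,v,i}$). I would make this precise by agreeing that for arbitrary distinct $u,v$ the relation ``same component in $T_i$'' is read off from $c_{\min(u,v),\max(u,v),i}$, and likewise for groups, and by declaring every vertex related to itself.

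For the direction from derivations to assignments, I would start with a derivation $\DDD=(T_0,\dots,T_t)$ of $G$ and set $c_{u,v,i}$ (for $u<v$) true exactly when $u,v$ share a component of $T_i$, and $g_{u,v,i}$ true exactly when they share a group of $T_i$. Verifying satisfiability is then a matter of matching: the two unit clauses $\bar c_{u,v,0}$ and $c_{u,v,t}$ record $\Card{\comps(T_0)}=\Card V$ and $\Card{\comps(T_t)}=1$, i.e.\ condition D1; the clause $c_{u,v,i}\lor\bar g_{u,v,i}$ is the refinement D2; the clauses $\bar c_{u,v,i-1}\lor c_{u,v,i}$ and $\bar g_{u,v,i-1}\lor g_{u,v,i}$ are the monotonicity conditions D3 and D4; and the three edge, neighborhood, and path clauses are literal rewritings of the three properties in the definition of a derivation of $G$. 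The transitivity clauses hold automatically, since ``same component'' and ``same group'' are genuine equivalence relations.

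For the converse I would take a satisfying assignment and reconstruct the templates. Here the only point that needs a short argument, rather than mere translation, is that the assignment actually defines partitions: I must check that the binary relations induced on $V$ by the (normalized) $c$- and $g$-variables are equivalence relations. Reflexivity and symmetry come from the self-relation and $\min/\max$ conventions, and transitivity is supplied by the transitivity clauses --- noting that for each unordered triple $\{u,v,w\}$ the three stated clauses (indexed by $u<v<w$) together cover all three possible ``two premises force the third'' patterns, so transitivity holds irrespective of which pair one starts from. Once the equivalence classes are declared to be the components and groups of $T_i$, conditions D1--D4 and the edge, neighborhood, and path properties follow by reading each of the remaining clause families backwards, yielding a derivation $\DDD=(T_0,\dots,T_t)$ of $G$ of length $t$.

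I expect the whole argument to be essentially bookkeeping, which is why the statement is advertised as a direct consequence of the definitions. If there is any genuine obstacle, it is the one flagged above: ensuring that the three transitivity clauses per triple really force a global equivalence relation, and being careful throughout to funnel every reference to a pair through the $\min/\max$ normalization, so that the $u<v$ variables used in D1--D4 and the $\min/\max$ variables used in the neighborhood and path clauses are interpreted consistently.
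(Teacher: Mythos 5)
Your proposal is correct and matches the paper's reasoning: the paper gives no explicit proof, stating only that the lemma ``is a direct consequence of the above definitions,'' and the clause-by-clause dictionary you describe (including the observation that the transitivity clauses are exactly what makes the reconstructed relations into partitions in the converse direction) is precisely the bookkeeping that remark leaves implicit.
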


\section{Encoding a $k$\hy Derivation of a Graph}

In this section, we describe how the formula $\Fder(G,t)$ can be
extended to encode a derivation of width at most $k$.  Ideally, one
wants to encode that unit propagation results in a conflict on any
assignment of component and group variables representing a derivation
containing a component with more than $k$ groups.  First we will
describe the conventional direct encoding~\cite{Walsh00} followed by
our representative encoding. Only the latter encoding realizes arc consistency~\cite{Gent02}.

%

\subsection{Direct Encoding}
We introduce new Boolean variables $l_{v,a,i}$ for $v\in V$, $1\leq a \leq k$,
and $0\leq i \leq t$. The purpose is to assign each vertex for each
template a group number between~$1$ and~$k$. The intended meaning of a
variable $l_{v,a,i}$ is that in $T_i$, vertex $v$ has group number~$a$. 
Let $F(G,k,t)$ denote the formula obtained from $\Fder(G,t)$ by adding the
following three sets of clauses. The first ensures that every vertex has at
least one group number, the second ensures that every vertex has at
most one group number, and the third ensures that two vertices of the
same group share the same group number.
\begin{myquote}
  $(l_{v,1,i} \lor l_{v,2,i} \lor \dots \lor l_{v,k,i})$ \qquad for $v
  \in V$, $0\leq i \leq t$,

  $(\bar l_{v,a,i} \lor \bar l_{v,b,i})$ \qquad for $v \in V$, $1 \leq
  a < b \leq k$, $0\leq i \leq t$,

  $(\bar l_{u,a,i} \lor \bar l_{v,a,i} \lor \bar c_{u,v,i} \lor  g_{u,v,i}) \land (\bar
  l_{u,a,i} \lor l_{v,a,i} \lor \bar g_{v,w,i}) \land 
  (l_{v,a,i} \lor \bar l_{v,a,i} \lor  \bar g_{u,v,i})$\\[0.1cm]
  \phantom{x}\hfill for $u,v \in V$, $u < v$, $1 \leq a \leq k$, $0 \leq i \leq
t$.
\end{myquote}

\noindent Together with Lemma~\ref{lem:der} this construction directly yields
the following statement.
\begin{proposition}
  Let $G=(V,E)$ be graph and $t=\Card{V}-k +1$. Then  
  $F(G,k,t)$ is satisfiable if and only if $\cwd(G)\leq k$. 
\end{proposition}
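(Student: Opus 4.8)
The plan is to derive the proposition by combining Lemma~\ref{lem:der} with Proposition~\ref{pro:expression}, reading off from the three blocks of labeling clauses that they pin down exactly the width-$k$ condition. The key observation is that on any assignment satisfying the labeling clauses the variables $l_{v,a,i}$ assign to each vertex $v$ in template $i$ a unique number in $\{1,\dots,k\}$ (by the at-least-one and at-most-one blocks), that this numbering is constant on each group, and that inside a common component distinct groups receive distinct numbers (by the third block). Consequently, inside every component of $T_i$ the groups are in bijection with the labels occurring there, so each component contains at most $k$ groups; that is, the underlying derivation has width at most~$k$.

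For the forward direction I would start from a satisfying assignment of $F(G,k,t)$. Its restriction to the component and group variables satisfies $\Fder(G,t)$, so by Lemma~\ref{lem:der} it encodes a derivation $\DDD$ of $G$ of length exactly $t=\Card{V}-k+1$. By the observation above the labeling clauses force $\DDD$ to have width at most~$k$, so $\DDD$ is a $k$-derivation of length $t\leq\Card{V}-k+1$, and Proposition~\ref{pro:expression} yields $\cwd(G)\leq k$ at once.

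For the converse, suppose $\cwd(G)\leq k$. Proposition~\ref{pro:expression} supplies a $k$-derivation of $G$ of length at most $\Card{V}-k+1=t$, whereas the encoding fixes the length to be exactly~$t$, so I would first pad a short derivation up to length~$t$. Because conditions D3 and D4 only demand refinement (equality permitted), one may insert copies of the all-singleton template $T_0$ immediately after $T_0$; the edge, neighborhood, and path properties hold vacuously for the inserted templates since all their groups are singletons, and the width is unaffected. From the resulting $k$-derivation of length exactly~$t$ I would read off the component and group variables, satisfying $\Fder(G,t)$ by Lemma~\ref{lem:der}, and then set the $l_{v,a,i}$ by fixing, in each template, an arbitrary numbering of the at most $k$ groups inside each component and letting each vertex take the number of its own group; this satisfies all three labeling blocks. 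Hence $F(G,k,t)$ is satisfiable.

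The only nonroutine points are (i) verifying that the third block of labeling clauses captures precisely ``same label inside a component $\Leftrightarrow$ same group,'' which is what converts the availability of $k$ labels into the bound of $k$ groups per component, and (ii) the length-matching step, needed because $F(G,k,t)$ encodes derivations of length \emph{exactly}~$t$ while Proposition~\ref{pro:expression} only guarantees length \emph{at most}~$t$. I expect (ii) to be the main thing to state cleanly, though it is disposed of by the remark that the refinement-based conditions D3 and D4 make padding with duplicate templates harmless.
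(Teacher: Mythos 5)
Your proof is correct and follows essentially the route the paper intends: the paper offers no explicit proof, merely asserting that the proposition follows from Lemma~\ref{lem:der} together with the construction of the direct encoding, and your argument is exactly the fleshed-out version of that claim (labeling clauses $\Leftrightarrow$ width at most $k$, then Proposition~\ref{pro:expression}). Your point (ii), padding a derivation of length at most $t$ to one of length exactly $t$ using the fact that D3 and D4 permit equality, is a genuine detail the paper silently skips, and you dispose of it correctly.
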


\begin{example}\label{example:unitprop}
  Let $G=(V,E)$ and $k = 2$. Vertices $u, v, w \in
  V$ in template $T_i$,  are in one component, but in different groups.
  Hence the corresponding component variables are true,
  and the corresponding group variables are false.
  The clauses containing the variables $l_{u,a,i}, l_{v,a,i}, l_{w,a,i}$
  with $a \in \{1,2\}$ after removing falsified literals
  are: 
\begin{myquote}
$(l_{u,1,i} \lor l_{u,2,i}) \land (l_{v,1,i} \lor l_{v,2,i}) \land (l_{w,1,i} \lor l_{w,2,i}) \land 
(\bar l_{u,1,i} \lor \bar l_{v,1,i}) \land (\bar l_{u,1,i} \lor \bar l_{w,1,i}) \land \phantom{x}$ \\
$(\bar l_{v,1,i} \lor \bar l_{w,1,i}) \land
(\bar l_{u,2,i} \lor \bar l_{v,2,i}) \land (\bar l_{u,2,i} \lor \bar l_{w,2,i}) \land (\bar l_{v,2,i} \lor \bar l_{w,2,i})$
\end{myquote}
These clauses cannot be satisfied, yet unit propagation will not result in a conflict. 
Therefore, a SAT solver may not be able to cut off the current branch.
\end{example}

\subsection{The Representative Encoding}

To overcome the unit propagation problem of the direct encoding, as
described in Example~\ref{example:unitprop}, we propose the {\em
  representative encoding} which uses two types of variables. First,
for each $v\in V$ and $1\leq i \leq t$ we introduce a representative
variable $r_{v,i}$. This variable, if assigned to true, expresses that
vertex $v$ is the representative of a group in
template~$T_i$.  In each group, only one vertex can be the
representative and we choose to make the first vertex in the
lexicographical ordering the representative. 
This results in the following clauses:
\begin{myquote}
$( r_{v,i} \lor  \bigvee_{u\in V, u < v}  g_{u,v,i} ) \land \bigwedge_{u \in V, u < v} (\bar r_{v,i} \lor \bar g_{u,v,i} )$
\qquad for $v \in V$, $0\leq i \leq t$

\end{myquote}
\medskip\noindent Additionally we introduce auxiliary variables
to efficiently encode that the number of representative vertices in a
component is at most $k$.
These auxiliary variables are based on the {\em
  order encoding}~\cite{TamuraTagaKitagawaBanbara09}.  Consider a
(non-Boolean) variable $L_{v,i}$ with domain $D = \{1,\dots,k\}$,
whose elements denote the group number of vertex $v$ in template
$T_i$. In the direct encoding, we used $k$ variables $l_{v,a,i}$ with
$a \in D$. Assigning $l_{v,a,i} = {\tt 1}$ in that encoding means
$L_{v,i} = a$.  Alternatively, we can use {\em order variables}
$o^>_{v,a,i}$ with $v \in V$, $a \in D \setminus \{k\}$, $ 0 \leq i
\leq t$.  Assigning $o^>_{v,a,i} = {\tt 1}$ means $L_{v,i} >
a$. Consequently, $o^>_{v,a,i} = {\tt 0}$ means $L_{v,i} \leq a$.

\begin{examplenoqed} 
Given an assignment to the order variables $o^>_{v,a,i}$, one can easily construct the equivalent assignment to the variables in the 
direct encoding (and the other way around). Below is a visualization of the equivalence relation with $k=5$. In the middle is a binary 
representation of each of the $k$ labels by concatenating the Boolean values to the order variables.
\begin{myquote}
$\begin{array}[b]{c@{\;\;}c@{\;\;}c@{\;\;}c@{\;\;}c@{\;\;}c@{\;\;}l}
L_v = 1 &\leftrightarrow& l_{v,1,i} = {\tt 1} &\leftrightarrow&  {\tt 0000} &\leftrightarrow&  o^>_{v,1,i} = o^>_{v,2,i} = o^>_{v,3,i} = o^>_{v,4,i} = {\tt 0} \\
L_v = 2 &\leftrightarrow& l_{v,2,i} = {\tt 1} & \leftrightarrow  & {\tt 1000} & \leftrightarrow & o^>_{v,1,i} = {\tt 1}, o^>_{v,2,i} = o^>_{v,3,i} = o^>_{v,4,i} = {\tt 0} \\
L_v = 3 &\leftrightarrow& l_{v,3,i} = {\tt 1} & \leftrightarrow & {\tt 1100} &  \leftrightarrow & o^>_{v,1,i} =  o^>_{v,2,i} = {\tt 1}, o^>_{v,3,i} = o^>_{v,4,i} = {\tt 0} \\
L_v = 4 &\leftrightarrow& l_{v,4,i} = {\tt 1} & \leftrightarrow  & {\tt 1110} &  \leftrightarrow & o^>_{v,1,i} = o^>_{v,2,i} = o^>_{v,3,i} = {\tt 1}, o^>_{v,4,i} = {\tt 0} \\
L_v = 5 &\leftrightarrow& l_{v,5,i} = {\tt 1} & \leftrightarrow & {\tt 1111} &  \leftrightarrow & o^>_{v,1,i} = o^>_{v,2,i} = o^>_{v,3,i} = o^>_{v,4,i} = {\tt 1}
\end{array}$\exqed
\end{myquote}
\end{examplenoqed}

\medskip

Although our encoding is based on the variables from the order encoding, 
we use none of the associated clauses. We implemented the original 
order~\cite{TamuraTagaKitagawaBanbara09}, which resulted 
in many long clauses and the performance was comparable to the direct
encoding.

Instead, we combined the representative and order variables.  Our use
of the order variables can be seen as the encoding of a sequential
counter~\cite{Sinz05}.  We would like to point out that if $u$ and $v$
are both representative vertices in the same component of template
$T_i$ and $u < v$, then $o^>_{u,a,i} = {\tt 0}$ and $o^>_{v,a,i} =
{\tt 1}$ must hold for some $1\leq a < k$.  Consequently,
$o^>_{u,k-1,i} = {\tt 0}$ (vertex $u$ has not the highest group number
in $T_i$), \mbox{$o^>_{v,1,i} = {\tt 1}$} (vertex $v$ has not the
lowest group number in $T_i$), and \mbox{$o^>_{u,a,i} \rightarrow
  o^>_{v,a+1,i}$}:
These constraints can be expressed by the following clauses.
\begin{myquote}
  $(\bar c_{u,v,i} \lor \bar r_{u,i} \lor \bar r_{v,i} \lor \bar o^>_{u,k-1,i}) \land (\bar c_{u,v,i} \lor \bar r_{u,i} \lor \bar r_{v,i} \lor o^>_{v,1,i}) \land \\
  \bigwedge_{1 \leq a < k-1} (\bar c_{u,v,i} \lor \bar r_{u,i} \lor
  \bar r_{v,i} \lor \bar o^>_{u,a,i} \lor o^>_{v,a+1,i})$
\quad for $u,v \in V$, $u < v$, $0 \leq i \leq t$.
\end{myquote}

%

\begin{example}
  Consider a graph $G = (V,E)$ with $u,v,w,x \in V$ and the representative encoding with $k=3$.
  We will show that if $u$,$v$,$w$, and $x$ are all
  in the same component and they are all
  representatives of their respective group numbers in template $T_i$, 
  then unit propagation will result in a conflict (because there are four
  representatives and only three group numbers).  Observe that 
   all corresponding component and representative variables are true. 
  This example, with falsified literals removed, contains the clauses
    $(\bar o^>_{u,2,i}) $, 
    $ ( \bar o^>_{u,1,i} \lor \bm{o^>_{v,2,i}})$, 
    $ (o^>_{v,1,i}) $, $ (\bar o^>_{u,2,i}) $, 
    $ (\bar o^>_{u,1,i} \lor \bm{o^>_{w,2,i}}) $, 
    $ (o^>_{w,1,i}) $, 
    $(\bar o^>_{u,2,i}) $, 
    $ (\bar o^>_{u,1,i} \lor o^>_{x,2,i} ) $, 
    $(o^>_{x,1,i})$, 
    $(\bar o^>_{v,2,i})$, 
    \mbox{$ (\bm{\bar o^>_{v,1,i}} \lor \bm{o^>_{w,2,i}})$}, 
    $ (o^>_{w,1,i}) $, 
    $(\bar o^>_{v,2,i}) $, 
    $ (\bm{\bar o^>_{v,1,i}} \lor o^>_{x,2,i}) $, 
    $(o^>_{x,1,i}) $, 
    $(\bar o^>_{w,2,i}) $, 
    $ (\bm{\bar o^>_{w,1,i}} \lor o^>_{x,2,i}) $, 
    $ (o^>_{x,1,i})$.
Literals that are falsified by unit clauses are shown in bold. 
Notice that $({\bar o^>_{v,1,i} \lor {o^>_{w,2,i}}})$ is falsified, i.e., a conflicting clause.
\end{example}

Both the direct and representative encoding require $n(n + k - 1)(n -
k + 2)$ variables.  The number of clauses depends on the set of edges. In
worst case, the number of clauses can be $\mathcal{O}(n^5 - n^4k)$ due
to the path condition.  

\section{Experimental Results}

In this section we report the results we obtained by running our SAT
encoding on various classes of graphs.  Given a graph $G= (V,E)$, we
compute that $G$ has clique-width $k$ by determining for which value
of $k$ it holds that $F(G,k, |V| - k + 1 )$ is satisfiable and $F(G,k-1,
|V| - k + 2)$ is unsatisfiable.
We used the SAT solver {\tt Glucose} version 2.2~\cite{Glucose}  to solve the encoded problems. 
{\tt Glucose} solved the hardest instances about twice as fast (or more) as 
other state-of-the-art solvers such as {\tt Lingeling}~\cite{Biere12}, {\tt Minisat}~\cite{Minisat} and {\tt Clasp}~\cite{Clasp}.
We used a 4-core Intel Xeon CPU E31280 3.50GHz,
32 Gb RAM machine running Ubuntu 10.04.

Although the direct and representative encodings result
in CNF formulas of almost equal size, there is a huge difference in
 costs to solve these instances. To determine the
clique-width of the famous named graphs (see below)
using the direct encoding takes about two to three orders of magnitude
longer as compared to the representative encoding.  For example, we can
establish that the Paley graph with 13 vertices has clique-width 9
within a few seconds using the representative encoding, while the
solver requires over an hour using the direct encoding. Because of the
huge difference in speed, we discarded the use of the direct encoding
in the remainder of this section.

We noticed that upper bounds (satisfiable formulas) are obtained
much faster than lower bounds (unsatisfiable formulas). The
reason is twofold.  First,  the whole search space
needs to be explored for lower bounds, while for upper bounds, one can be ``lucky"
and find a solution fast. Second, due to our encoding, upper bound
formulas are smaller (due to a smaller $t$) which makes them
easier. Table~\ref{tab:random20} shows this for a random graph with 20
vertices for the direct encoding and the representative
encoding. 

\begin{table}[htb]
\centering
\caption{Runtimes in seconds of the direct and representative encoding on a random graph
  with 20 vertices and 95 edges for different values of $k$. Up to $k=9$ the formulas
  are unsatisfiable, afterwards they are satisfiable. Timeout (TO) is 20,000 seconds.}
\label{tab:random20}

\medskip\small\setlength{\tabcolsep}{4pt}
\begin{tabular}{@{}l@{~~}|@{~~}ccccccc@{~~}|@{~~}ccccccc@{}}
\toprule
~~~$k$ & 3 & 4 & 5 & 6 & 7 & 8 & 9 & 10 & 11 & 12 & 13 & 14 & 15 & 16\\
\midrule
direct & 1.39 & 14.25 & 101.1 & 638.5 & 18,337 & TO & TO & TO & TO & 30.57 & 0.67 & 0.50 & 0.10 & 0.10\\
repres &  0.62 & 2.12 & 8.14 & 12.14 & 33.94 & 102.3 & 358.6 &9.21 & 0.40 & 0.35 &0.32 & 0.29 & 0.29 & 0.28\\
\toprule
\end{tabular}
\end{table}

We examined whether adding symmetry-breaking predicates could improve
performance. We used {\tt Saucy} version 3 for this
purpose~\cite{KatebiSakallahMarkov12}. After the addition of the
clauses with representative variables, the number of symmetries is drastically reduced.
However, one can generate symmetry-breaking
predicates for $\Fder(G,t)$ and add those instead. Although it is helpful
in some cases, the average speed-up was between 5 to 10\%.

Our experimental computations are ongoing. Below we report on some
of the results we have obtained so far.

\subsection{Random Graphs}

The asymptotics of the clique-width of random graphs have been studied
by Lee et al.~\cite{LeeLeeOum12}.  Their results show that for random
graphs on $n$ vertices the following holds asymptotically almost surely: If the
graphs are very sparse, with an edge probability below $1/n$, then
clique-width is at most 5; if the edge probability is larger than
$1/n$, then the clique-width grows at least linearly in $n$.  Our first group
of experiments complements these asymptotic results and provides a
detailed picture on the clique-width of small random graphs.  We have
used the SAT encoding to compute the clique-width of graphs with 10,
15, and 20 vertices, with the edge probability ranging from 0 to~1. A
plot of the distribution is displayed in Figure~\ref{fig:random}.  It is
interesting to observe the symmetry at edge probability $1/2$, and the
how the steepness of the curve increases with the number of
vertices. Note the ``shoulders'' of the curve for very sparse and very
dense graphs.
\begin{figure}[htb]
\centering
\includegraphics[width=11cm]{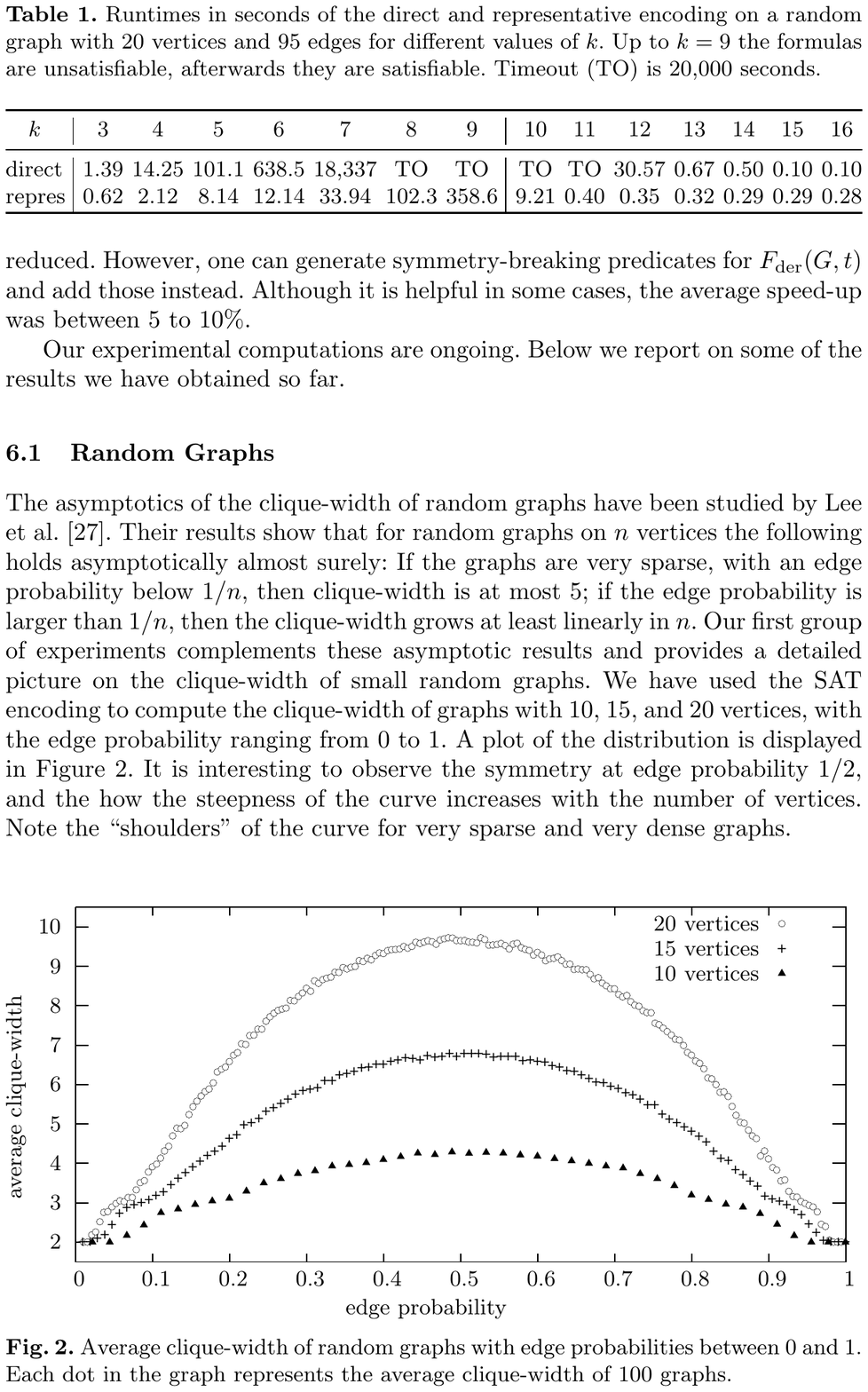}
\caption{Average clique-width of random graphs with edge probabilities between 0
  and~1. Each dot in the graph represents the average
  clique-width of 100 graphs. 
\label{fig:random}}
\end{figure}
  
\subsection{The Clique-Width Numbers}
 
For every $k>0$, let $n_k$ denote the smallest number such that there
exists a graph with $n_k$ vertices that has clique-width~$k$. We call
$n_k$ the $k$th \emph{clique-width number}.  From the
characterizations known for graphs of clique-width 1, 2, and 3,
respectively~\cite{CornelEtal12}, it is easy to determine the first
three clique-width numbers ($1$, $2$, and $4$). However, determining
$n_4$ is not straightforward, as it requires nontrivial arguments to
establish clique-width lower bounds.  We would like to point out that
a similar sequence for the graph invariant \emph{treewidth} is easy to
determine, as the complete graph on $n$ vertices is the smallest graph
of treewidth $n-1$.  One of the very few known graph classes of
unbounded clique-width for which the exact clique-width can be
determined in polynomial time are
grids~\cite{HeggernesMeisterRotics11}; the $k \times k$ grid with $k
\geq 3$ has clique-width $k+1$~\cite{GolumbicRotics00}.  Hence grids
provide the upper bounds $n_4 \leq 9$, $n_5\leq 16$, $n_6 \leq 25$,
and $n_7 \leq 36$.  With our experiments we could determine $n_4=6$,
$n_5=8$, $n_6 = 10$, $n_7=11$, $n_8 \leq 12$, and $n_9 \leq 13$.  It
is known that the path on four vertices ($P_4$) is the unique smallest
graph in terms of the number of vertices with clique-width 3. We could
determine that the triangular prism (\mbox{3-Prism}) is the unique
smallest graph with clique-width $4$, and that there are exactly 7
smallest graphs with clique-width $5$.
There are 68 smallest graphs with clique-width 6 and one of them has
only 18 edges. See Figure~\ref{fig:smallest} for an illustration. 
Additionally, we found several graphs of size 11 with clique-width 7
by extending a graph of size 10 with clique-width 6.

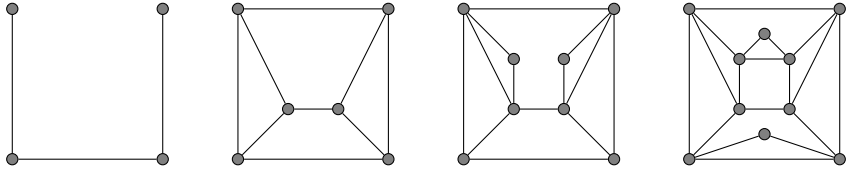
\begin{figure}[ht]
  \centering
  \tikzstyle{every node}=[circle,draw,inner sep=1.5pt,fill=gray]
    \begin{tikzpicture}
      \draw 
      (0,0)      node (a) {}
      (0,2)      node (b) {}
      (2,0)      node (c) {}
      (2,2)      node (d) {}
      (b)--(a)--(c)--(d)
      ;
    \begin{scope}[xshift=3cm]
      \draw 
      (0,0)      node (a) {}
      (0,2)      node (b) {}
      (2,0)      node (c) {}
      (2,2)      node (d) {}
      (0.667,0.667)    node (e) {}
      (1.333,0.667)    node (g) {}

      (a)--(b)--(d)--(c)--(a)
      (a)--(e)--(b)
      (c)--(g)--(d)
      (e)--(g)
      ;
    \end{scope}
    
        \begin{scope}[xshift=6cm]
      \draw 
      (0,0)      node (a) {}
      (0,2)      node (b) {}
      (2,0)      node (c) {}
      (2,2)      node (d) {}
      (0.667,0.667)    node (e) {}
      (0.667,1.333)    node (f) {}
      (1.333,0.667)    node (g) {}
      (1.333,1.333)    node (h) {}
      (a)--(b)--(d)--(c)--(a)
      (f)--(e)--(g)--(h)
     (a)--(e)
     (b)--(e)
     (b)--(f)
     (c)--(g)
     (d)--(h)
     (d)--(g)
      ;
    \end{scope}
    
        \begin{scope}[xshift=9cm]
      \draw 
      (0,0)      node (a) {}
      (0,2)      node (b) {}
      (2,0)      node (c) {}
      (2,2)      node (d) {}
      (0.667,0.667)    node (e) {}
      (0.667,1.333)    node (f) {}
      (1.333,0.667)    node (g) {}
      (1.333,1.333)    node (h) {}
      (1,0.333)    node (i) {}
      (1,1.667)    node (j) {}
      (a)--(b)--(d)--(c)--(a)
      (e)--(f)--(h)--(g)--(e)
     (a)--(e)
     (b)--(e)
     (b)--(f)
     (c)--(g)
     (d)--(h)
     (d)--(g)
      (a)--(i)--(c)
      (f)--(j)--(h)
      ;
    \end{scope}
\end{tikzpicture}
\caption{Smallest graphs with clique-width 3, 4, 5, and 6 (from left to right).}
  \label{fig:smallest}

\end{figure}

\begin{proposition}
  The clique-width sequence starts with the numbers $1$, $2$, $4$,
  $6$, $8$, $10$, $11$.
\end{proposition}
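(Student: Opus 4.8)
The plan is to establish, for each $k\in\{1,\dots,7\}$, a matching upper and lower bound on $n_k$. Two elementary facts about clique-width control the search. Since clique-width is monotone under taking induced subgraphs and the clique-width of a disjoint union equals the maximum of the clique-widths of its components, a smallest graph of clique-width $k$ is necessarily connected, so throughout I may restrict to connected graphs. Moreover, appending an isolated vertex leaves the clique-width unchanged, and adding a universal vertex to a connected graph cannot decrease it; hence the maximum clique-width over $m$-vertex graphs, and likewise over connected $m$-vertex graphs, is nondecreasing in $m$. For $k\le 3$ the values $n_1=1$, $n_2=2$, $n_3=4$ follow from the classical characterizations of graphs of clique-width $1$, $2$, and $3$: a single vertex, the single edge $K_2$, and $P_4$ are the respective smallest witnesses, and every graph on at most three vertices is $P_4$-free and hence has clique-width at most $2$.

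For $k\in\{4,5,6,7\}$ I would read off $n_k$ from the SAT encoding. By Proposition~\ref{pro:expression} together with the construction of $\Fder(G,t)$ and its extension $F(G,k,t)$, a graph $G$ on $n$ vertices satisfies $\cwd(G)\le k$ if and only if $F(G,k,n-k+1)$ is satisfiable; consequently $\cwd(G)=k$ exactly when $F(G,k,n-k+1)$ is satisfiable while $F(G,k-1,n-k+2)$ is unsatisfiable. The upper bound $n_k\le N$ is then discharged by exhibiting an explicit $N$-vertex witness---the triangular prism for $k=4$, and suitable graphs on $8$, $10$, and $11$ vertices for $k=5,6,7$ (the $11$-vertex graph obtained by adding a vertex to a $10$-vertex graph of clique-width $6$)---and certifying its clique-width with the two SAT calls above.

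The lower bound $n_k\ge N$ is the substantial part: I must rule out every graph on fewer than $N$ vertices having clique-width $k$. It suffices to verify that every connected graph on exactly $N-1$ vertices has clique-width at most $k-1$: by the monotonicity noted above this bounds every smaller connected graph, and a disconnected graph inherits its clique-width from its (connected and strictly smaller) components. Concretely, I would enumerate all connected graphs on $N-1$ vertices up to isomorphism---using an isomorph-free generator such as \texttt{nauty}/\texttt{geng}---and certify for each that $F(G,k-1,N-k+1)$ is satisfiable. Carried out for all sizes up to $10$ vertices, this enumeration simultaneously yields the exact count of connected graphs of each clique-width.

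The step I expect to dominate is not any single argument but the sheer scale of the lower-bound enumeration: the number of non-isomorphic connected graphs grows explosively (already in the millions at $10$ vertices), so a conclusive lower bound amounts to deciding millions of SAT instances, a substantial fraction of them unsatisfiable. This is exactly where the preceding sections become indispensable: Lemma~\ref{lem:strict-shorter} shrinks each formula to a derivation of length $n-k+1$, and the representative encoding keeps the individual unsatisfiable instances within reach of the solver, whereas the direct encoding is far too slow to complete the sweep. A smaller but genuine concern is organizing the isomorph-free generation so that no graph is overlooked, since a single missed graph could silently invalidate a lower bound.
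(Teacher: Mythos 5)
Your proposal matches the paper's approach: the first three values come from the known characterizations of graphs of clique-width at most $3$, and for $k\in\{4,\dots,7\}$ the paper likewise certifies upper bounds by SAT calls on explicit witnesses and lower bounds by an isomorph-free (Nauty) enumeration of all connected graphs on fewer vertices, deciding $F(G,k,n-k+1)$ and $F(G,k-1,n-k+2)$ for each. The only substantive difference is that the paper further prunes the enumeration to \emph{prime} graphs (graphs with only trivial modules), using the fact that clique-width is either $2$ or the maximum clique-width of the induced prime subgraphs; this subsumes your connectivity, twin, and universal-vertex reductions and is part of what makes the $10$-vertex sweep feasible.
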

We used Brendan McKay's software package Nauty \cite{Mackay81} to
avoid checking isomorphic copies of the same graph.  There are several
other preprocessing methods that can speed up the search for small
graphs of prescribed clique-width $k\geq 2$.  Obviously, we can limit
the search to \emph{connected} graphs, as the clique-width of a graph
is clearly the maximum clique-width of its connected components. We
can also ignore graphs that contain \emph{twins}---two vertices
that have exactly the same neighbors---as we can delete one of them
without changing the clique-width. Similarly, we can ignore graphs
with a \emph{universal vertex}, a vertex that is adjacent to all other
vertices, as it can be deleted without changing the clique-width. All
these filtering steps are subsumed by the general concept of
\emph{prime graphs}.  Consider a graph $G=(V,E)$. A vertex $u\in V$
\emph{distinguishes} vertices $v,w\in V$ if $uv\in E$ and $uw\notin
E$. A set $M\subseteq V$ is a \emph{module} if no vertex from
$V\setminus M$ distinguishes two vertices from $M$. A module $M$ is
\emph{trivial} if $\Card{M}\in \{0,1,\Card{V}\}$. A graph is
\emph{prime} if it contains only trivial modules. It is well-known
that the clique-width of a graph is either $2$ or the maximum
clique-width of its induced prime
subgraphs~\cite{CourcelleOlariu00}. Hence, in our search, we can
ignore all graphs that are not prime. We can efficiently check whether a graph 
is prime~\cite{HabibPaul10}. The larger the number of
vertices, the larger the fraction of non-prime graphs (considering
connected graphs modulo isomorphism).  Table~\ref{table:sequence}
gives detailed results.

\begin{table}[htb]
\caption{Number of connected and prime graphs with specified clique-width, modulo isomorphism.}\label{table:sequence}
\centering
\medskip
\begin{tabular}{@{}c@{~~~~~~~~}r@{~~~}r@{~~~~~~}@{~~~}r@{~~~}r@{~~~}r@{~~~}r@{~~~}r@{}}
\toprule
&&& \multicolumn{5}{c}{clique-width}\\
\cmidrule{4-8}
$|V|$ & connected & prime~ & $2$ & $3$ & $4$ & $5$ & ~~$6$ \\ 
\midrule
4 &             6 &        1   &  0   &          1   &              0   &       0  & 0\\
5 &           21 &       4   &  0   &          4   &              0   &       0  & 0\\
6 &        112 &      26  & 0   &        25   &              1  &        0 & 0\\
7 &        853 &    260  &  0   &      210  &            50   &       0 & 0\\
8 &   11,117 & 4,670  &  0  &     1,873  &      2,790  &        7 & 0\\
9 &   261,080  & 145,870  & 0&  16,348 &   125,364  & 4,158 & 0\\
10 &   11,716,571  & 8,110,354  & 0&  142,745 &   5,520,350  & 2,447,190 & 68\\

\toprule
\end{tabular}
\end{table}

\subsection{Famous Named Graphs}
\label{sec:famous}

The graph theoretic literature contains several graphs that have
names, sometimes inspired by the graph's topology, and sometimes after
their discoverer. 
We have computed the clique-width of several named graphs,
the results are given in Table~\ref{table:named} (definitions of all
considered graphs can be found in MathWorld \cite{MathWorld}).  The
\emph{Paley graphs}, named after the English mathematician Raymond
Paley (1907--1933), stick out as having large clique-width. Our
results on the clique-width of Paley graphs imply some upper bounds on
the 9th and 11th clique-width numbers: $n_9\leq 13$ and $n_{11}\leq
17$.
 
\begin{table}[htb]
\centering
\caption{Clique-width of named graphs. Sizes are reported for the unsatisfiables.}\label{table:named}
\medskip
\begin{tabular}{l@{\hspace{5mm}}r@{\hspace{5mm}}r@{\hspace{5mm}}r@{\hspace{5mm}}r@{\hspace{5mm}}r@{\hspace{5mm}}r@{\hspace{5mm}}r}
\toprule
graph & $|V|$ & $|E|$& cwd & variables & clauses & UNSAT  & SAT \\ 
\midrule
Brinkmann & 21 & 42 & 10 & 8,526 & 163,065 & 3,932.56 & 1.79\\
Chv\'{a}tal   &                12 & 24 & 5 & 1,800 & 21,510 & 0.40 & 0.09 \\
Clebsch &                 16 & 40 & 8 & 3,872 & 60,520 & 191.02 & 0.09\\
Desargues &             20 & 30 & 8 & 7,800 & 141,410 & 3,163.70 & 0.26\\
Dodecahedron &      20 & 30 & 8 & 7,800 & 141,410 & 5,310.07 & 0.33\\
Errera     &                17 & 45 & 8 & 4,692  & 79,311 & 82.17 & 0.16\\
Flower snark &      20 & 30 & 7 & 8,000 & 148,620 & 276.24 & 3.9\\
Folkman &                20 & 40 & 5 & 8,280 & 168,190 & 11.67 & 0.36\\
Franklin &               12 & 18 & 4 & 1,848 & 21,798 & 0.07 & 0.04\\
Frucht   &               12 & 18 & 5 & 1,800 & 20,223 & 0.39 & 0.02\\
Hoffman &           16  & 32 & 6 & 4,160 & 64,968 & 8.95 & 0.46\\
Kittell  &                23  & 63 & 8 & 12,006 & 281,310 & 179.62 & 18.65\\
McGee &              24  & 36 & 8 & 13,680 &  303,660 & 8,700.94 & 59.89\\
Sousselier &       16  & 27 & 6 & 4,160  & 63,564 & 3.67 & 11.75\\
Paley-13  &         13 & 39 & 9 & 1,820 & 22,776 & 12.73 & 0.05 \\ 
Paley-17 &         17 & 68 & 11 & 3,978 & 72,896 & 194.38 & 0.12\\ 
Pappus  &            18 & 27 & 8 & 5,616 & 90,315 & 983.67 & 0.14\\
Petersen   &         10 & 15 & 5 & 1,040 & 9,550 & 0.10 & 0.02\\ 
Poussin &            15 & 39 & 7 & 3,300 & 50,145 & 9.00 & 0.21\\
Robertson &        19 & 38 & 9 & 6,422 & 112, 461 & 478.83 & 0.76\\ 
Shrikhande &       16 & 48 & 9 & 3,680 & 59,688 & 129.75 & 0.11\\
\toprule
\end{tabular}
\end{table}

\section{Conclusion}

We have presented a SAT approach to the exact computation of
clique-width, based on a reformulation of clique-width and several
techniques to speed up the search.  This new approach allowed us to
systematically compute the exact clique-width of various small graphs.  
We think that our results could be of relevance for theoretical investigations. 
For instance, knowing small vertex-minimal graphs of certain clique-width 
could be helpful for the design
of discrete algorithms that recognize graphs of bounded
clique-width. Such graphs can also be useful as gadgets for a
reduction to show that the recognition of graphs of clique-width~4 is
$\NP$\hy hard, which is still a long-standing open problem
\cite{FellowsRosamondRoticsSzeider09}.  Furthermore, as discussed in
Section~\ref{sect:intro}, there are no heuristic algorithms to compute
the clique-width directly, but heuristic algorithms for related
parameters can be used to obtain upper bounds on the clique-width. Our
SAT-based approach can be used to empirically evaluate how far
heuristics are from the optimum, at least for small and
medium-sized graphs. 

So far we have focused in our experiments on the exact clique-width,
but for various applications it is sufficient to have  good upper
bounds. Our results (see Table~\ref{tab:random20}) suggest that our
approach can be scaled to medium-sized graphs for the computation of
upper bounds.  We also observed that for many graphs the upper bound
of Lemma~\ref{lem:strict-shorter} is not tight. Thus, we expect that
if we search for shorter derivations, which is significantly faster,
this will yield optimal or close to optimal solutions in many cases.

Finally, we would like to mention that our SAT-based approach is very
flexible and open. It can easily be adapted to variants of
clique-width, such as linear clique-width
\cite{HeggernesMeisterPapadopoulos12,FellowsRosamondRoticsSzeider09},
$m$\hy clique-width \cite{CourcelleTwigg10}, or
NLC-width~\cite{Wanke94}. Hence, our approach can be used for an
empirical comparison of these parameters.

\section*{Acknowledgement}
The authors acknowledge the Texas Advanced Computing Center (TACC) at The University of Texas at 
Austin for providing grid resources that have contributed to the research results reported within this paper.

 
\end{document}